\newcommand{\bbR}{\mathbb{R}}
\theoremstyle{plain}
\newtheorem{theorem}{Theorem}
\newtheorem{assumption}{Assumption}
\newcommand{\var}{\mathrm{Var}}
\newcommand{\calZ}{{\mathcal{Z}}}
\newcommand{\bbE}{\mathbb{E}}
\newcommand{\I}{{\mathbf{I}}}
\newcommand{\htheta}{\widehat{\theta}}
\newcommand{\ttheta}{\widetilde{\theta}}
\newcommand{\hthetai}{\widehat{\theta}_{-i}}
\newcommand{\prox}{{\rm prox}}
\newcommand{\CV}{{\rm CV}}
\newcommand{\NS}{{\rm NS}}
\renewcommand{\IJ}{{\rm IJ}}
\begin{document}
\title{Iterative Approximate Cross-Validation}
\author{Yuetian Luo$^1$, ~ Zhimei Ren$^2$, ~ and ~ Rina Foygel Barber$^2$}
\date{}
%\affil{}
\footnotetext[1]{Data Science Institute, University of Chicago}
\footnotetext[2]{Department of Statistics, University of Chicago}

\maketitle

\begin{abstract}
	Cross-validation (CV) is one of the most popular tools for assessing and selecting predictive models. However, standard CV suffers from high computational cost when the number of folds is large. Recently, under the empirical risk minimization (ERM) framework, a line of works proposed efficient methods to approximate CV based on the solution of the ERM problem trained on the full dataset. 
	However, in large-scale problems, it can be hard to obtain the exact solution of the ERM problem, either due to limited computational resources or due to early stopping as a way of preventing overfitting. 
		In this paper, we propose a new paradigm to efficiently approximate CV when the ERM problem is solved via an iterative first-order algorithm, without running until convergence. Our new method extends existing guarantees for CV approximation to hold along the whole trajectory of the algorithm, including at convergence, thus generalizing
		existing  CV approximation methods. Finally, we illustrate the accuracy and computational efficiency of our method through a range of empirical studies.
\end{abstract}

%%%%%%%%%%%%%%%%%%%%%%%%%%%%%%%%%%%%%%%%%%%%%%%%
\section{Introduction} \label{sec: intro}
%%%%%%%%%%%%%%%%%%%%%%%%%%%%%%%%%%%%%%%%%%%%%%%%
In machine learning and statistics, cross-validation (CV) \citep{allen1974relationship,stone1974cross,geisser1975predictive} is one of the most popular methods for the tasks of assessing and selecting predictive models. It is conceptually simple and easy to implement. Among many variants of CV, one popular choice is leave-one-out CV (also
called the Jackknife), which often offers the most accurate prediction for the out-of-sample risk in many practical and challenging scenarios \citep{arlot2010survey,stephenson2020approximate,rad2020scalable}. Leave-one-out CV uses $n-1$ out of $n$ data points for training, and the remaining one for testing,
and then repeats for each of the $n$ data points in the sample. The resulting $n$ cross-validation errors can be used for model assessment or selection, tuning parameter selection, etc. 

However, the superior performance of leave-one-out CV is accompanied by high computational cost when $n$ is large, as the potentially complex model needs to be fitted $n$ times. To be able to run leave-one-out CV but avoid the high computational burden, several methods have been proposed to approximate the expensive model refitting step with an inexpensive surrogate step, with much progress in the recent literature particularly in the context of empirical risk minimization (ERM)~\citep{obuchi2016cross,beirami2017optimal,giordano2019swiss,koh2017understanding,wang2018approximate,wilson2020approximate,rad2020scalable,stephenson2020approximate}.

\paragraph{Background: approximate CV for ERM} 
Suppose one aims to estimate some parameter of interest $\theta^*$ via solving the following regularized ERM problem:
\begin{equation} \label{eq: main-objective}
	\htheta = \arg\!\min_{\theta \in \bbR^p} F(\calZ;\theta),
\end{equation} 
where, for a dataset $\calZ$ of size $n$,
\[F(\calZ;\theta):=  \sum_{i=1}^n \ell(Z_i; \theta) + \lambda \pi(\theta).\]
Here $\ell(Z; \theta)$ is the loss function for data point $Z$ and parameter $\theta$, $\calZ = \{Z_i\}_{i=1}^n$ is the set of observed data points, $\pi(\cdot)$ is a regularization term, and $\lambda \geq 0$ is a tuning parameter. Under this setting, the widely used leave-one-out CV loss for estimating the prediction performance of $\htheta$ is given by 
\begin{equation} \label{eq: canonical-exact-cv}
	\CV(\{\hthetai \}_{i=1}^n ) = \frac{1}{n} \sum_{i=1}^n \ell(Z_i; \hthetai),
\end{equation} 
where $\hthetai := \arg\!\min_{\theta \in \bbR^p} F(\calZ_{-i};\theta)$ is the minimizer of the leave-one-out objective $F(\calZ_{-i};\theta) :=   \sum_{j=1, j\neq i}^n \ell(Z_j; \theta) + \lambda \pi(\theta)$. 
Since computing $\hthetai $ for every $i\in [n]:=\{1,\ldots,n\}$ is often expensive, this motivates finding an approximation
to $\CV(\{\hthetai \}_{i=1}^n ) $ that does not require fitting $n$ many models.

Observe that $\htheta$ and $\hthetai$ are the solutions to two highly similar minimization problems---they minimize
$F(\calZ;\theta)$  and $F(\calZ_{-i};\theta) = F(\calZ;\theta) - \ell(Z_i;\theta)$, respectively.
One approach in the recent literature
 proposes approximating $\hthetai$ by initializing at $\theta=\htheta$ and taking a single Newton step (NS) on the objective $F(\calZ_{-i};\theta)$ \citep{beirami2017optimal,rad2020scalable},
\begin{equation}\label{eqn:NS-intro}
	\ttheta^{\NS}_{-i} = \htheta - \left(\nabla^2_{\theta} F ( \calZ_{-i}; \htheta ) \right)^{-1} \nabla_{\theta} F ( \calZ_{-i}; \htheta ).
\end{equation}
Another approach is based on the infinitesimal jackknife (IJ) \citep{jaeckel1972infinitesimal,efron1982jackknife,giordano2019swiss}:
 \begin{equation}\label{eqn:IJ-intro}
	\ttheta^{\IJ}_{-i} = \htheta - \left(\nabla^2_{\theta} F ( \calZ; \htheta ) \right)^{-1} \nabla_{\theta} F ( \calZ_{-i}; \htheta ).
\end{equation}
(Here for simplicity we consider the case where $F$ is twice differentiable; we will discuss  other settings below.)

\paragraph{The challenge: iterative algorithms}

The accuracy of the estimators in \eqref{eqn:NS-intro} and \eqref{eqn:IJ-intro} relies heavily on the assumption that $\htheta$ can be computed exactly, which may not be the case in many scenarios. For example, when \eqref{eq: main-objective} is solved via an iterative algorithm such as gradient descent (GD), we may not be able to run the algorithm to convergence due to the limited computational resources. Other algorithms we might use have a very slow rate of convergence, such as stochastic gradient descent (SGD). Another important example is that, in some settings, we may intentionally stop training early to avoid overfitting in learning machine learning models (see e.g. \cite{jabbar2015methods})

Figure~\ref{fig: intro} shows a simple simulation that illustrates the loss of accuracy suffered by the NS and IJ methods,
 when run with an estimate of $\htheta$ obtained before convergence. Specifically, consider logistic regression with a ridge penalty, with the estimator obtained by running GD or SGD for $t$ steps---that is, we would like to estimate $\CV(\{\htheta^{(t)}_{-i}\}_{i=1}^n)$, where $\htheta^{(t)}_{-i}$ is the solution after $t$ steps of GD or SGD on the objective function $F(\calZ_{-i};\theta)$.
 $\htheta^{(t)}_{-i}$  is then approximated by running either NS~\eqref{eqn:NS-intro} or IJ~\eqref{eqn:IJ-intro}, but
with $\htheta^{(t)}$ (the $t$th step of GD or SGD on the full objective function $F(\calZ;\theta)$) in place of the exact minimizer $\htheta$. We see highly inaccurate approximations of NS and IJ methods to the leave-one-out CV loss $\CV(\{\hthetai^{(t)} \}_{i=1}^n )$ during the early iterations of GD, and even after 1000 iterations for SGD. 
Details of this simulation are given in Section \ref{sec: simulation}.

\paragraph{Our contribution: iterative approximate CV} 
The central question of our paper is this:
\begin{quote}
 Can we develop new approximation schemes to  leave-one-out CV when $\htheta$ is not known exactly, but is estimated with an iterative algorithm that is not run to convergence?\end{quote}
In this work, we provide a positive answer to this question. We find that before the convergence of the iterative algorithm, efficient leave-one-out CV approximation can still be possible if we can leverage information about the iterative nature of the algorithm. In Figure~\ref{fig: intro}, we see that our proposed method, Iterative Approximate CV (IACV), achieves an accurate approximation to  leave-one-out CV loss $\CV(\{\hthetai^{(t)} \}_{i=1}^n )$ across all iterations $t$, even far before convergence; at convergence, IACV yields the same results as NS. Theoretically, we are able to show that, under some regularity conditions, IACV enjoys guaranteed CV approximation along the whole trajectory of the algorithm, and moreover, at convergence, IACV recovers the NS method~\eqref{eqn:NS-intro}.
\begin{figure}[t]
	\centering\vspace{0.2in}
	\includegraphics[width = 0.7\textwidth]{./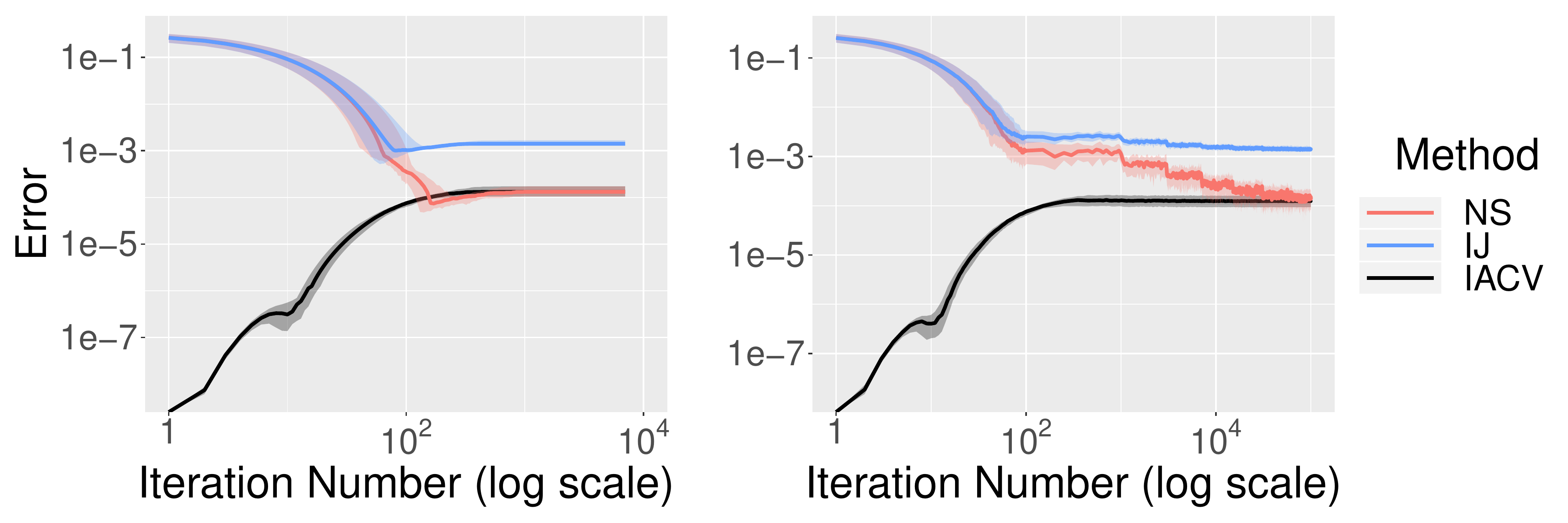}
%\vspace{-0.2in}
	\caption{The red and blue lines show the error in estimating $\CV(\{\htheta^{(t)}_{-i}\}_{i=1}^n)$  for the NS and IJ methods, run with $\htheta^{(t)}$ as an approximation to the exact solution $\htheta$, while the black line shows the error for our proposed method, iterative approximate CV (IACV). The objective function is given by logistic regression plus a ridge penalty, solved iteratively with GD (left, $n = 1000$) and SGD (right, $n = 1000$, batch-size = $400$). Solid lines represent the median value over $100$ repetitions and the shaded area shows the region between lower and upper quartiles.} \label{fig: intro}
\end{figure}

%%%%%%%%%%%%%%%%%%%%%%%%%%%%%
\subsection{Additional Related Literature} \label{sec: related-literature}
%%%%%%%%%%%%%%%%%%%%%%%%%%%%%  

In addition to the NS and IJ methods, described earlier, many other works in the literature also offer methods and theories for the problem of approximating  leave-one-out CV.
The closest to ours is the work by \citet{koh2017understanding,ghosh2020approximate}, which also considers the problem of approximating CV with an inexact $\htheta$, but with results of a very different flavor---this line of work assumes that $\htheta^{(t)}$ is already an accurate estimate of $\htheta$ (i.e., $t$ is large and the iterative algorithm is near convergence), and bounds the error in approximating the leave-one-out models $\htheta_{-i}$ as a function of this convergence error $\|\htheta^{(t)}-\htheta\|_2$. In contrast, our work instead estimates the models $\htheta^{(t)}_{-i}$ (i.e., at each time $t$ rather than at convergence), and does not assume that $\|\htheta^{(t)}-\htheta\|_2$ is small.

Another recent line of work studies the problem of efficient data deletion in a trained model under the ERM framework, where the aim is to provide an estimator of $\theta$ that is approximately independent of data point $i$, for the sake of the $i$th individual's privacy. These problems are often referred to as ``data deletion'', ``machine unlearning'', or ``decremental learning'' in the literature \citep{tsai2014incremental,cao2015towards,bourtoule2021machine,neel2021descent}. Like our work, these methods also aim to approximate the leave-one-out model efficiently given the trained full model, but additionally require that the approximate model is ``statistically indistinguishable'' from a model that would have resulted from retraining without the $i$th individual's data \citep{izzo2021approximate,guo2020certified,ginart2019making}. These works generally need to add noise in the approximation procedure in order to ensure this constraint is satisfied. 

%%%%%%%%%%%%%%%%%%%%%%%%%%%%%%%%%%%%%%%%%%%%%%%%
\section{Iterative Approximate CV} \label{sec: methodology}
%%%%%%%%%%%%%%%%%%%%%%%%%%%%%%%%%%%%%%%%%%%%%%%%
In this section, we present our method for approximating  leave-one-out CV in the setting of iterative optimization, where the algorithm may not be run to convergence.
We find that even when $\htheta$ can not be exactly computed, we can still approximate the leave-one-out iterates $\htheta^{(t)}_{-i}$ accurately by leveraging the structure of the iterative update. In our theoretical results below, we will see that our estimates have guaranteed accuracy (under mild conditions) along all iterations $t\geq 1$ of the algorithm, i.e., even at iterations when the algorithm is far from convergence.

\subsection{Framework}
The iterative procedures that we will study in this paper---gradient descent, stochastic gradient descent, and proximal gradient descent---can all be expressed within the following general framework.
Suppose our objective function can be written in the form
\[F(\calZ; \theta) = g(\calZ; \theta) + h(\theta)\]
where $g(\calZ; \theta)$ is twice-differentiable in $\theta$ while $h(\theta)$ may be nondifferentiable. We consider solving the problem \eqref{eq: main-objective} iteratively as follows: for each step $t \geq 1$, we take a gradient step on $g$ (possibly after subsampling the data points), and a proximal step on $h$. Specifically, the iterations are given by
\begin{equation}\label{eqn: iter-general}\htheta^{(t)} = 
\arg\!\min_\theta\left\{ \frac{1}{2\alpha_t} \|\theta - \theta'\|^2_2 + h(\theta)\right\} \quad 
\textnormal{ where }\quad \theta' = \htheta^{(t-1)} - \alpha_t\nabla_\theta g(\calZ_{S_t};\htheta^{(t-1)}),\end{equation}
where $S_t\subseteq[n]$ is a subset of indices, $\calZ_{S_t}:= \{Z_i: i \in S_t \}$ is the corresponding subset of the data, and $\alpha_t>0$ is the learning rate. 

In the setting where the objective function is twice-differentiable, we can take $g(\calZ;\theta) = F(\calZ;\theta)$ and $h\equiv 0$. For example, in the regularized ERM setting, if the loss function $\ell$ and the penalty function $\pi(\theta)$ are twice-differentiable, we can simply take $g(\calZ;\theta) = \sum_{i=1}^n \ell(Z_i;\theta) + \lambda\pi(\theta)$. Then~\eqref{eqn: iter-general} yields gradient descent if we choose the full dataset $S_t = [n]$ at each iteration,
\begin{equation} \label{eq: gd-iterative-scheme}
	\textnormal{GD}: \quad \htheta^{(t)} =  \htheta^{(t-1)} - \alpha_t \nabla_{\theta} F(\calZ; \htheta^{(t-1)}  ),
\end{equation}
or stochastic gradient descent if at each iteration we sample a random batch $S_t\subseteq[n]$,
\begin{equation} \label{eq: sgd-iterative-scheme}
	\textnormal{SGD}: \quad \htheta^{(t)} =  \htheta^{(t-1)} - \alpha_t \nabla_{\theta} F(\calZ_{S_t}; \htheta^{(t-1)}  ).
\end{equation}

In other settings, we may have a nondifferentiable term $h(\theta)$,
which has an inexpensive proximal map (i.e., the solution to~\eqref{eqn: iter-general} can be computed efficiently for each iteration). For instance, in regularized ERM with a nonsmooth penalty function $\pi$ such as the $\ell_1$ norm, we might take $g(\calZ;\theta) = \sum_{i=1}^n \ell(Z_i;\theta)$ and $h(\theta) = \lambda\pi(\theta)$. Then the general iterative scheme~\eqref{eqn: iter-general}, run again with the full dataset $S_t=[n]$ at each iteration, reduces to proximal gradient descent,
\begin{equation}\label{eq: proxgd-iterative-scheme}
		\textnormal{ProxGD}: \, 
  \htheta^{(t)} = \arg\!\min_\theta\Big\{ \frac{1}{2\alpha_t} \|\theta - \theta'\|^2_2 + h(\theta)\Big\} \quad
\textnormal{ where }\quad \theta' = \htheta^{(t-1)} - \alpha_t\nabla_\theta g(\calZ;\htheta^{(t-1)}).
\end{equation}

\subsection{Estimation Procedure: General Case}
We will now provide an algorithm for IACV for general iterative procedures of the form~\eqref{eqn: iter-general}. After deriving the general formulation, we will then show how it specializes to each of the three settings, GD, SGD, and ProxGD.

The targets $\htheta^{(t)}_{-i}$ for $i\in[n]$ are obtained by running the same iterative solver as in \eqref{eqn: iter-general} except with $i$-th data point being left out: for each $t\geq 1$,
\begin{equation}\label{eqn: iter-general-minus-i}
	\htheta^{(t)}_{-i} = \arg\!\min_\theta\left\{ \frac{1}{2\alpha_t} \|\theta - \theta'\|^2_2 + h(\theta)\right\} \quad 
\textnormal{ where }\theta'= \htheta^{(t-1)}_{-i} - \alpha_t\nabla_\theta g(\calZ_{S_t\setminus\{i\}};\htheta^{(t-1)}_{-i}).
\end{equation}

In many settings, running the iterative algorithm for each $i\in[n]$ is computationally too expensive, since at each iteration $t$, it requires computing gradients for $g$ at $n$ different parameter vectors $\{\htheta^{(t-1)}_{-i}\}_{i=1}^n$.
Thus, we aim to find computationally inexpensive surrogates for these gradients.

We now define our procedure, which produces approximations $\ttheta^{(t)}_{-i}\approx \htheta^{(t)}_{-i}$, at each iteration $t\geq 1$ and for each $i\in[n]$.
At iteration $t$, if we are not running the exact leave-one-out procedure, then computing $\theta'$ in~\eqref{eqn: iter-general-minus-i} above involves two unknowns: the previous iterate, $\htheta^{(t-1)}_{-i}$, which we simply approximate with our previous iteration's estimate $\ttheta^{(t-1)}_{-i}$, and its gradient, $\nabla_\theta g(\calZ_{S_t\setminus\{i\}};\htheta^{(t-1)}_{-i})$, which we now address.
If $\theta\mapsto g(\cdot; \theta)$ is twice-differentiable, a natural idea is to approximate $\nabla_\theta g(\calZ_{ S_t \setminus i}; \htheta^{(t-1)}_{-i})$ by its Taylor expansion at $\htheta^{(t-1)}$. The reason we choose $\htheta^{(t-1)}$ as the base point is that it can be shared across the $n$ problems of approximating $\nabla_\theta g(\calZ_{ S_t \setminus i}; \htheta^{(t-1)}_{-i})$ for each $i \in [n]$. We thus take the approximation
\begin{equation}\label{eq: main-taylor-expansion-general}
    	\nabla_\theta g(\calZ_{ S_t \setminus i}; \htheta^{(t-1)}_{-i}) \approx \nabla_\theta g(\calZ_{ S_t \setminus i}; \htheta^{(t-1)}) 
	+ \nabla^2_\theta g(\calZ_{S_t \setminus i}; \htheta^{(t-1)})[ \hthetai^{(t-1)} - \htheta^{(t-1)} ],
\end{equation}
and then again plug in $\ttheta^{(t-1)}_{-i}$ for the unknown $\htheta^{(t-1)}_{-i}$ in the last term.

Thus, for each $i\in[n]$ each $t\geq 1$, the IACV approximation is given by
\begin{equation}\label{eqn: iter-IACV}
	\begin{split}
		\textnormal{IACV:} \ \ &\ttheta^{(t)}_{-i} = \arg\!\min_\theta\left\{ \frac{1}{2\alpha_t} \|\theta - \theta'\|^2_2 + h(\theta)\right\}\\
& \textnormal{ where }   \theta'= \ttheta^{(t-1)}_{-i} - \alpha_t\big(\nabla_\theta g(\calZ_{ S_t \setminus i}; \htheta^{(t-1)}) 
	+ \nabla^2_\theta g(\calZ_{S_t \setminus i}; \htheta^{(t-1)})[ \ttheta_{-i}^{(t-1)} - \htheta^{(t-1)} ]\big).
	\end{split}
\end{equation}

\subsection{Estimation Procedure for GD, SGD, and ProxGD}
Next, we give the details for implementing this general procedure in the specific settings of GD, SGD, and ProxGD.

\paragraph{Gradient descent (GD)}
For GD, we have $h(\theta)\equiv 0$ and $S_t \equiv [n]$, so the steps of IACV reduce to
\begin{equation}\label{eqn: iter-IACV-GD}
	\ttheta^{(t)}_{-i}= \ttheta^{(t-1)}_{-i} - \alpha_t\big(\nabla_\theta F(\calZ_{-i}; \htheta^{(t-1)}) 
	+ \nabla^2_\theta F(\calZ_{-i}; \htheta^{(t-1)})[ \ttheta_{-i}^{(t-1)} - \htheta^{(t-1)} ]\big).
\end{equation}
 
\paragraph{Stochastic gradient descent (SGD)}
For SGD, we again have $h(\theta)\equiv 0$, but the sets $S_t$ consist of small batches of data. The steps of IACV reduce to
\begin{equation}\label{eqn: iter-IACV-SGD}
    \ttheta^{(t)}_{-i}= \ttheta^{(t-1)}_{-i} - \alpha_t\big(\nabla_\theta F(\calZ_{S_t\setminus i}; \htheta^{(t-1)}) 
	+ \nabla^2_\theta F(\calZ_{S_t\setminus i}; \htheta^{(t-1)})[ \ttheta_{-i}^{(t-1)} - \htheta^{(t-1)} ]\big).
\end{equation}

\paragraph{Proximal gradient descent (ProxGD)}
For ProxGD, we again take $S_t\equiv [n]$, but the function $h(\theta)$ is nontrivial (e.g., a nonsmooth regularizer). The steps of IACV reduce to
\begin{equation}\label{eqn: iter-IACV-ProxGD}
	\begin{split}
		&\ttheta^{(t)}_{-i}=\arg\!\min_\theta\left\{\frac{1}{2\alpha_t}\|\theta - \theta'\|^2_2 + h(\theta)\right\}\\
\textnormal{ where } &\theta' = \ttheta^{(t-1)}_{-i} - \alpha_t\big(\nabla_\theta g(\calZ_{-i}; \htheta^{(t-1)}) 
	+ \nabla^2_\theta g(\calZ_{-i}; \htheta^{(t-1)})[ \ttheta_{-i}^{(t-1)} - \htheta^{(t-1)} ]\big).
	\end{split}
\end{equation}

%%%%%%%%%%%%%%%%%%%%%%%%%%%%%%%%%%%%%%%%%%%%
\section{Computation and Memory Complexity} \label{sec: comp-memory-complexity-computation}
%%%%%%%%%%%%%%%%%%%%%%%%%%%%%%%%%%%%%%%%%%%%
In this section, we compare the computation and memory complexity of the proposed leave-one-out CV approximation method~\eqref{eqn: iter-IACV} with the exact leave-one-out CV~\eqref{eqn: iter-general-minus-i}.
Note that running IACV~\eqref{eqn: iter-IACV} assumes that we have access to the iterates $\htheta^{(t)}$, obtained by running the iterative procedure~\eqref{eqn: iter-general} on the full dataset. In addition, at each time $t$, we also need to calculate $\nabla_\theta g(\calZ_{ S_t \setminus i}; \htheta^{(t-1)})$ and $ \nabla^2_\theta g(\calZ_{S_t \setminus i}; \htheta^{(t-1)})$ in order to compute the estimator. In other words,
at each time $t$ we need to compute the gradient and Hessian of $g$ for datasets $\calZ_{S_t\setminus i}$ that {\em vary} with $i$ but at a parameter vector $\htheta^{(t-1)}$ that is {\em constant} over $i$---in general, this can be done efficiently due to the separability of the data in the ERM problem. This is different from the computational cost of exact leave-one-out CV, which instead requires computing $\nabla_\theta g(\calZ_{ S_t \setminus i}; \htheta^{(t-1)}_{-i})$ for each $i$ at each iteration $t$, i.e., the gradient of $g$ needs to be computed at $n$ {\em different} parameter vectors $\{\htheta^{(t-1)}_{-i}\}_{i=1}^n$.

For simplicity, let us consider the case where $g(\calZ;\theta) = \sum_{i=1}^n \ell(Z_i;\theta)$. We denote the computational cost of one call to $\nabla_\theta \ell(Z_i, \cdot)$ as $A_p$, and one call to $\nabla^2_\theta \ell(Z_i, \cdot)$ as $B_p$, which depend on the dimension $p$ of the parameter $\theta$. We also assume that the cost of computing $\nabla_\theta g(\calZ;\cdot) = \sum_{i=1}^n \nabla_\theta \ell(Z_i;\cdot)$ is equal to $nA_p$, and similarly $nB_p$ for $\nabla^2_\theta g(\calZ;\cdot)$; this assumption is mild since, in most typical settings, the gradient or Hessian of the loss can only be computed via evaluation on each data point one-by-one.
Finally, let $D_p$ denotes the computational cost of one call to the proximal operator $\theta' \mapsto\arg\!\min_\theta\{\frac{1}{2\alpha_t}\|\theta - \theta'\|^2_2 + h(\theta)\}$.

\begin{table}[t]
	\centering
	\begin{tabular}{ @{\,}c@{\,}| @{\,}c@{\,} | @{\,}c@{\,} }
		& \multicolumn{2}{c}{Computational Cost} \\
		 \hline
		& IACV & Exact LOO CV  \\
		 \hline
	   GD &	 $n(A_p+B_p) + np^2$ & $n^2 A_p + np$  \\
		 \hline
            SGD  & $K(A_p+B_p) + np^2$ & $n K A_p + np$ \\
 		 \hline  
		ProxGD	& $n(A_p+B_p +D_p) + np^2 $ & $n^2 A_p + nD_p + np$ \\
		 \hline
	\end{tabular}
	\caption{The order of per-iteration computation complexity of iterative approximate CV (IACV) as compared to exact leave-one-out (LOO) CV. See Section~\ref{sec: comp-memory-complexity-computation} for details.} \vspace{-.1in}\label{tab: time-memory-complexity}
\end{table}

In Table \ref{tab: time-memory-complexity}, we provide the per-iteration computation complexity of computing $\{ \ttheta_{-i}^{(t)} \}_{i=1}^n$ for a single iteration $t$ in our proposed method IACV, as compared to computing $\{ \hthetai^{(t)} \}_{i=1}^n$ in the exact leave-one-out CV, for the three iterative algorithms GD, SGD (with batch size $K$), and ProxGD. 
Details for these calculations are given in Appendix~\ref{app:computational-complexity}.

For example, for GD, we can see that IACV is more efficient as long as $B_p + p^2\ll nA_p$.  Since typically, $A_p$ and $B_p$ are of order $p$ and $p^2$, respectively, we can see that condition $B_p + p^2 \ll nA_p$ is typically satisfied when $n \gg p$ (which is exactly the regime considered in the literature for analyzing the existing CV approximation methods \citep{beirami2017optimal,wilson2020approximate}), and thus IACV is computationally more efficient in this regime.

One limitation of IACV is an increased memory cost---the proposed method needs to store the data, gradient, and Hessian, which costs $O(np + p^2)$ space, while the exact leave-one-out CV costs $O(np)$ as it only needs to store the data and gradient. In the $n \gg p$ regime, however, 
the memory cost of the two methods are on the same order.

%%%%%%%%%%%%%%%%%%%%%%%%%%%%%%%%%%%%%%%%%%%%
\section{Accuracy Guarantees for IACV} \label{sec: approximation-property}
%%%%%%%%%%%%%%%%%%%%%%%%%%%%%%%%%%%%%%%%%%%%
 In this section, we provide theoretical guarantees for IACV in the GD and SGD settings. (Guarantees for ProxGD are given in Appendix~\ref{app: prox-GD-theory}.) For both GD and SGD we will consider the objective function $F(\calZ;\theta) = \sum_{i=1}^n\ell(Z_i;\theta)+\lambda\pi(\theta)$, where $\pi$ is a twice-differentiable regularizer, as before.
 
 We will study two notions of accuracy: the {\em approximation error}, \begin{equation}\label{eqn: define-approx-error}\textnormal{Err}_{\textnormal{approx}}^{(t)} = \frac{1}{n}\sum_{i=1}^n \|\ttheta^{(t)}_{-i} - \htheta^{(t)}_{-i}\|_2,\end{equation}
measuring our accuracy in approximating the exact leave-one-out estimators $\htheta^{(t)}_{-i}$, and the {\em CV error},
\begin{equation}\label{eqn: define-model-assessment-error}\textnormal{Err}_{\textnormal{CV}}^{(t)} = |\CV(\{\ttheta^{(t)}_{-i}\}_{i=1}^n) - 
\CV(\{\htheta^{(t)}_{-i}\}_{i=1}^n)|, 
\end{equation}
measuring our accuracy in estimating the leave-one-out CV loss for $\{\htheta^{(t)}_{-i} \}_{i = 1}^n$. (This latter notion of error is the quantity that was plotted in Figure~\ref{fig: intro}.)

\subsection{Guarantees for GD} \label{sec: guarantee-for-GD}
We first state our assumptions for the GD setting.
\begin{assumption}\label{asm: hessian-condition} %[Hessian Control] 
 For all $ t \geq 1$, $$ \min_{i \in [n]} \lambda_{\min}( \nabla^2_\theta F (\calZ_{-i}; \htheta^{(t-1)} ) ) \geq n \lambda_0 \quad  \text{ and } \quad  \max_{i \in [n]}  \lambda_{\max}( \nabla^2_\theta F (\calZ_{-i}; \htheta^{(t-1)} ) )  \leq n\lambda_1$$
	for some positive constants $\lambda_0, \lambda_1 > 0$.
\end{assumption} 
\noindent This first assumption says the Hessian of $F$ is well-conditioned along all iterates. Note that in the literature for analyzing the NS~\eqref{eqn:NS-intro} or IJ~\eqref{eqn:IJ-intro} estimators, the Hessian is often assumed to be well-conditioned at $\htheta$, the minimizer of \eqref{eq: main-objective} \citep{beirami2017optimal,wilson2020approximate,giordano2019swiss}, which is sufficient since $\ttheta_{-i}^{\NS}$ and $\ttheta_{-i}^{\IJ}$ in \eqref{eqn:NS-intro} and \eqref{eqn:IJ-intro} are based on $\htheta$ only. Here we require a  stronger assumption because our estimator depends on $\htheta^{(t)}$ for all $t$.

The second assumption is about the gradient. 
\begin{assumption}\label{asm: gradient-control} %[Gradient Control] 
	For all $ t \geq 1$, $i \in [n]$, we have $\| \nabla_\theta \ell(Z_i; \htheta^{(t-1)} ) \|_2\leq \eta_i $ for some $\eta_i > 0$. 
\end{assumption}

The final assumption requires the Hessian to be Lipschitz. 
\begin{assumption}\label{asm: third-derivative-control} 
For all $i \in [n]$, $ \nabla_\theta^2 F(\calZ_{-i}; \cdot)$ is $n\gamma$-Lipschitz, i.e., for any $\theta_1, \theta_2 \in \bbR^p$,
\begin{equation*}
    \| \nabla_\theta^2 F(\calZ_{-i}; \theta_1) -\nabla_\theta^2 F(\calZ_{-i}; \theta_2)  \| \leq n \gamma \|\theta_1 - \theta_2\|_2.
\end{equation*}
\end{assumption}

We are now ready to state our first main result about IACV. 
\begin{theorem}[Approximation Error of IACV for GD] \label{th: GD-case-approx-error}
	 Suppose $\htheta^{(0)} = \htheta^{(0)}_{-i} = \ttheta^{(0)}_{-i}$ for all $i \in [n]$, $\alpha_t \leq 1/(n \lambda_1)$ for $t \geq 1$, and Assumptions \ref{asm: hessian-condition}--\ref{asm: third-derivative-control} are satisfied with $\gamma < 2 \lambda_0$ and $n \geq \frac{4 \|\eta\|_\infty }{2 \lambda_0 - \gamma} $. Then for all $ t \geq 1$ and $i \in [n]$, we have \[\| \htheta^{(t)} - \htheta^{(t)}_{-i} \|_2\leq \frac{2 \eta_i}{(2 \lambda_0 - \gamma) n}\]
 and
 \[\|\ttheta_{-i}^{(t)} - \htheta^{(t)}_{-i}\|_2\leq \frac{4 \gamma \eta_i^2 }{\lambda_0 (2 \lambda_0 - \gamma)^2 n^2 }.\]
\end{theorem}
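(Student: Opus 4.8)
## Proof Proposal

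The plan is to prove both bounds simultaneously by induction on $t$, since the two statements are coupled: the recursion for the approximation error $\|\ttheta^{(t)}_{-i} - \htheta^{(t)}_{-i}\|_2$ will involve the leave-one-out perturbation $\|\htheta^{(t-1)} - \htheta^{(t-1)}_{-i}\|_2$, so we need the first bound in hand to control it. I will maintain the inductive hypothesis that, at step $t-1$, both $\|\htheta^{(t-1)} - \htheta^{(t-1)}_{-i}\|_2 \le \frac{2\eta_i}{(2\lambda_0-\gamma)n}$ and $\|\ttheta^{(t-1)}_{-i} - \htheta^{(t-1)}_{-i}\|_2 \le \frac{4\gamma\eta_i^2}{\lambda_0(2\lambda_0-\gamma)^2 n^2}$ hold; the base case $t=1$ is immediate because all three iterates agree at $t=0$ (so the first bound holds trivially at $t-1=0$, and after one GD/IACV step one checks the $t=1$ instances directly using the same estimates below).

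\emph{First bound (exact leave-one-out perturbation).} For fixed $i$, write the GD updates for $\htheta^{(t)}$ and $\htheta^{(t)}_{-i}$ and subtract. Using $F(\calZ;\theta) = F(\calZ_{-i};\theta) + \ell(Z_i;\theta)$, one gets
\[
\htheta^{(t)} - \htheta^{(t)}_{-i} = \big(\htheta^{(t-1)} - \htheta^{(t-1)}_{-i}\big) - \alpha_t\big(\nabla_\theta F(\calZ_{-i};\htheta^{(t-1)}) - \nabla_\theta F(\calZ_{-i};\htheta^{(t-1)}_{-i})\big) - \alpha_t \nabla_\theta \ell(Z_i;\htheta^{(t-1)}).
\]
The standard trick is to write the gradient difference as $\int_0^1 \nabla^2_\theta F(\calZ_{-i}; \cdot)\,ds$ acting on $\htheta^{(t-1)} - \htheta^{(t-1)}_{-i}$, so the first two terms become $(\I - \alpha_t H)(\htheta^{(t-1)} - \htheta^{(t-1)}_{-i})$ for an averaged Hessian $H$. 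By Assumption~\ref{asm: hessian-condition} and the Lipschitz bound (Assumption~\ref{asm: third-derivative-control}) together with the inductive size bound on $\|\htheta^{(t-1)} - \htheta^{(t-1)}_{-i}\|_2$, one shows $H$ has eigenvalues between roughly $n(\lambda_0 - \gamma\cdot(\text{small}))$ and $n\lambda_1$, so $\|\I - \alpha_t H\|_{\mathrm{op}} \le 1 - \alpha_t n(\lambda_0 - \tfrac{\gamma}{2}\cdot c)$ for an appropriate constant using $\alpha_t \le 1/(n\lambda_1)$. Combined with $\|\nabla_\theta\ell(Z_i;\htheta^{(t-1)})\|_2 \le \eta_i$ (Assumption~\ref{asm: gradient-control}), the recursion is of the form $a_t \le (1-\alpha_t n c')a_{t-1} + \alpha_t\eta_i$, whose fixed point is $\eta_i/(n c')$; choosing the bookkeeping so that $c' = (2\lambda_0-\gamma)/2$ gives exactly $\frac{2\eta_i}{(2\lambda_0-\gamma)n}$. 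The condition $n \ge \frac{4\|\eta\|_\infty}{2\lambda_0-\gamma}$ is what guarantees the perturbation $\|\htheta^{(t-1)}-\htheta^{(t-1)}_{-i}\|_2$ is small enough that the $\gamma$-Lipschitz correction to the Hessian's lower eigenvalue only erodes $\lambda_0$ to $(2\lambda_0-\gamma)/2$, closing the induction.

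\emph{Second bound (IACV approximation error).} Set $e^{(t)}_i := \ttheta^{(t)}_{-i} - \htheta^{(t)}_{-i}$. Subtract the IACV update \eqref{eqn: iter-IACV-GD} from the exact leave-one-out update \eqref{eqn: iter-general-minus-i}. The key cancellation: the IACV update uses the first-order Taylor expansion of $\nabla_\theta F(\calZ_{-i};\cdot)$ around $\htheta^{(t-1)}$ evaluated at $\ttheta^{(t-1)}_{-i}$, whereas the exact update uses $\nabla_\theta F(\calZ_{-i};\htheta^{(t-1)}_{-i})$. Writing the difference and adding/subtracting the Taylor expansion evaluated at the true point $\htheta^{(t-1)}_{-i}$, the error splits into (a) a term $(\I - \alpha_t \widehat H)e^{(t-1)}_i$ where $\widehat H = \nabla^2_\theta F(\calZ_{-i};\htheta^{(t-1)})$ is contractive by the same eigenvalue argument, and (b) a Taylor remainder term bounding $\nabla_\theta F(\calZ_{-i};\htheta^{(t-1)}_{-i})$ minus its first-order expansion around $\htheta^{(t-1)}$, which by the integral form of Taylor's theorem plus Assumption~\ref{asm: third-derivative-control} is at most $\frac{n\gamma}{2}\|\htheta^{(t-1)}_{-i} - \htheta^{(t-1)}\|_2^2 \le \frac{n\gamma}{2}\cdot\frac{4\eta_i^2}{(2\lambda_0-\gamma)^2 n^2}$ using the first bound. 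So $\|e^{(t)}_i\| \le (1-\alpha_t n c')\|e^{(t-1)}_i\| + \alpha_t \cdot \frac{2\gamma\eta_i^2}{(2\lambda_0-\gamma)^2 n}$, whose fixed point (with $c' = \lambda_0$ this time, since here the contraction factor needn't absorb a second perturbation and one can keep the full $\lambda_0$) is $\frac{2\gamma\eta_i^2}{\lambda_0(2\lambda_0-\gamma)^2 n^2}$ — off by a factor $2$ from the claim, which is absorbed by being slightly more careful or by using $c' = \lambda_0/2$ in the bookkeeping, yielding exactly $\frac{4\gamma\eta_i^2}{\lambda_0(2\lambda_0-\gamma)^2 n^2}$.

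\emph{Main obstacle.} The delicate point is the eigenvalue/contraction bookkeeping: Assumption~\ref{asm: hessian-condition} controls the Hessian only at the iterates $\htheta^{(t-1)}$, but the first recursion needs control of an \emph{averaged} Hessian along the segment between $\htheta^{(t-1)}$ and $\htheta^{(t-1)}_{-i}$, forcing us to pay a $\gamma$-Lipschitz correction whose size depends on the very quantity we are bounding — hence the circular-looking but self-consistent induction, and hence the role of the hypotheses $\gamma < 2\lambda_0$ and $n \ge 4\|\eta\|_\infty/(2\lambda_0-\gamma)$. Getting the constants to land exactly as stated (rather than with slack) requires choosing the intermediate contraction rate $c'$ and the per-step additive term consistently so that the geometric-series fixed point matches; this is the part where care, not ideas, is needed.
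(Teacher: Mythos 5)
Your proposal is correct and follows essentially the same route as the paper's proof: a coupled induction in which the leave-one-out gradient difference is Taylor-expanded around $\htheta^{(t-1)}$, Assumption~\ref{asm: hessian-condition} with $\alpha_t\le 1/(n\lambda_1)$ gives a $(1-\alpha_t n\lambda_0)$-type contraction, and the $n\gamma$-Lipschitz remainder is controlled quadratically via the first bound, with $n\ge 4\|\eta\|_\infty/(2\lambda_0-\gamma)$ closing the fixed-point bookkeeping exactly as you describe. The only differences are cosmetic: the paper keeps the contraction matrix evaluated at $\htheta^{(t-1)}$ (where the eigenvalue assumption directly applies) and carries the Lipschitz correction as a separate additive quadratic term rather than folding it into the contraction rate as you do in Step~1, and it uses the cruder mean-value form of the remainder (no factor $1/2$), which is why its constant in the second bound is $4$ rather than the factor-of-two-stronger constant your integral-form remainder yields.
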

\noindent In Theorem \ref{th: GD-case-approx-error}, we give upper bounds that are independent of $t$, for clarity of the result; tighter, iteration-dependent upper bounds are discussed in Appendix \ref{app: a-time-dependent-bound}. As an immediate consequence of the last bound, the approximation error~\eqref{eqn: define-approx-error} is bounded as \[\textnormal{Err}_{\textnormal{approx}}^{(t)}\leq 
\frac{4 \gamma \|\eta\|^2_\infty}{\lambda_0 (2 \lambda_0 - \gamma)^2 n^2 }\textnormal{ for all $t\geq 1$.}\]

\vskip.1cm

	In Theorem \ref{th: GD-case-approx-error}, we see that the error bound for $ \|\ttheta_{-i}^{(t)} - \htheta^{(t)}_{-i}\|_2$ is of a smaller order than the one for $\| \htheta^{(t)} - \htheta^{(t)}_{-i} \|_2$. This is exactly what we need---the goal of leave-one-out CV is to determine how the leave-one-out estimators $\htheta^{(t)}_{-i}$ behave differently from the estimator $\htheta^{(t)}$ computed on the full dataset, and thus the approximations $\ttheta^{(t)}_{-i}$ are useful only if they can improve over the ``baseline'' accuracy of $\htheta^{(t)}$ itself.
 
 In addition, it has been shown in \citet{beirami2017optimal,wilson2020approximate} that the NS~\eqref{eqn:NS-intro} and IJ~\eqref{eqn:IJ-intro} estimators achieve an approximation error of order $1/n^2$, which is the same as our result above. However, NS and IJ only achieve this error when run with the exact value of $\htheta$ (as we have seen in Figure \ref{fig: intro}, this assumption is crucial for empirical accuracy), while IACV achieves the $O(1/n^2)$ approximation error bound along all iterations of the algorithm rather than only at convergence.

\vskip.1cm
Based on the approximation error bounds established in Theorem \ref{th: GD-case-approx-error}, we are also able to provide guarantees for IACV's ability to approximate the leave-one-out CV loss $\CV(\{\htheta^{(t)}_{-i}\}_{i=1}^n)$, with error on the order of $1/n^2$. We will need one additional assumption on the gradient of the loss, which can be viewed as a stronger version of Assumption~\ref{asm: gradient-control}:
\begin{assumption}\label{asm: GD-CV-error}
 Suppose for any $i \in [n]$ and all $t \geq 1$, 
    there exists $\eta'_i > 0$ such that \[\sup_{a\in[0,1]} \|\nabla \ell(Z_i; a \ttheta^{(t)}_{-i} + (1-a) \htheta^{(t)}_{-i} ) \|_2 \leq \eta_i'.\]
\end{assumption}
\noindent \noindent For example, this assumption will be satisfied if $\ell(Z_i,\cdot)$ is $\eta'_i$-Lipschitz for each $i$.
\begin{theorem}[CV Error of IACV for GD] \label{th: model-assessment-bound}
If the assumptions in Theorem \ref{th: GD-case-approx-error} are satisfied, and Assumption~\ref{asm: GD-CV-error} holds, then for all $t \geq 1$:
   \begin{equation*}
 \begin{split}
  \textnormal{Err}_{\textnormal{CV}}^{(t)} \leq \frac{1}{n} \sum_{i=1}^n \frac{4 \gamma \eta'_i \eta_i^2}{\lambda_0 (2 \lambda_0 - \gamma)^2 n^2 }. 
 \end{split}
 \end{equation*}
\end{theorem}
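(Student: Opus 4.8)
The plan is to reduce $\textnormal{Err}_{\textnormal{CV}}^{(t)}$ directly to the per-index approximation error $\|\ttheta^{(t)}_{-i} - \htheta^{(t)}_{-i}\|_2$, which is already controlled by Theorem~\ref{th: GD-case-approx-error}, using nothing more than the triangle inequality and a first-order (mean-value) expansion of the loss. Concretely, first I would unfold the definition of $\CV$ and apply the triangle inequality to get
\[\textnormal{Err}_{\textnormal{CV}}^{(t)} = \left|\frac{1}{n}\sum_{i=1}^n\big(\ell(Z_i;\ttheta^{(t)}_{-i}) - \ell(Z_i;\htheta^{(t)}_{-i})\big)\right| \leq \frac{1}{n}\sum_{i=1}^n\big|\ell(Z_i;\ttheta^{(t)}_{-i}) - \ell(Z_i;\htheta^{(t)}_{-i})\big|,\]
so that it suffices to bound each summand separately.

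Next, for a fixed $i$, I would write the loss difference as the line integral of its gradient along the segment from $\htheta^{(t)}_{-i}$ to $\ttheta^{(t)}_{-i}$,
\[\ell(Z_i;\ttheta^{(t)}_{-i}) - \ell(Z_i;\htheta^{(t)}_{-i}) = \int_0^1 \big\langle \nabla\ell(Z_i;\, a\ttheta^{(t)}_{-i}+(1-a)\htheta^{(t)}_{-i}),\ \ttheta^{(t)}_{-i}-\htheta^{(t)}_{-i}\big\rangle\, \mathrm{d}a,\]
which is valid since $g$, and hence each $\ell(Z_i;\cdot)$, is (twice) differentiable. Applying Cauchy--Schwarz inside the integral and invoking Assumption~\ref{asm: GD-CV-error} to bound the gradient norm along exactly this segment gives $|\ell(Z_i;\ttheta^{(t)}_{-i}) - \ell(Z_i;\htheta^{(t)}_{-i})| \leq \eta'_i\,\|\ttheta^{(t)}_{-i}-\htheta^{(t)}_{-i}\|_2$. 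Finally I would plug in the bound $\|\ttheta^{(t)}_{-i}-\htheta^{(t)}_{-i}\|_2 \leq \frac{4\gamma\eta_i^2}{\lambda_0(2\lambda_0-\gamma)^2 n^2}$ from Theorem~\ref{th: GD-case-approx-error} (whose hypotheses are assumed here) into each term of the sum, which yields exactly $\textnormal{Err}_{\textnormal{CV}}^{(t)} \leq \frac1n\sum_{i=1}^n \frac{4\gamma\eta'_i\eta_i^2}{\lambda_0(2\lambda_0-\gamma)^2 n^2}$.

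There is essentially no hard step here: all the real work is in Theorem~\ref{th: GD-case-approx-error}, and this result is best viewed as a corollary of it. The only point requiring a little care is that Assumption~\ref{asm: GD-CV-error} has been stated precisely along the segment connecting $\htheta^{(t)}_{-i}$ and $\ttheta^{(t)}_{-i}$ (rather than only at the endpoints or at the iterates themselves), so that the mean-value step above applies verbatim; this is also why the stronger Assumption~\ref{asm: GD-CV-error} is invoked in addition to Assumption~\ref{asm: gradient-control}, which controls the gradient only at the iterates $\htheta^{(t-1)}$.
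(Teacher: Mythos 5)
Your proposal is correct and follows essentially the same route as the paper's proof: a first-order expansion of $\ell(Z_i;\cdot)$ along the segment from $\htheta^{(t)}_{-i}$ to $\ttheta^{(t)}_{-i}$ (the paper uses the mean-value form where you use the integral form, an immaterial difference), followed by Cauchy--Schwarz, Assumption~\ref{asm: GD-CV-error}, and the bound from Theorem~\ref{th: GD-case-approx-error}. Your closing remark about why Assumption~\ref{asm: GD-CV-error} must hold along the whole segment is exactly the point the paper's argument relies on.
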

Theorem \ref{th: model-assessment-bound} suggests IACV might be useful when we perform early stopping based on CV loss as it has guaranteed accurate estimates of $\CV(\{\widehat{\theta}^{(t)}_{-i}\}_{i=1}^n)$ along the entire trajectory, including times $t$ that are well before convergence.

\paragraph{When are these assumptions satisfied?}
	Next, we provide an example of a setting where Assumptions \ref{asm: hessian-condition}--\ref{asm: GD-CV-error} are likely satisfied. Suppose the data is generated from the generalized linear model with a canonic link function \citep{mccullagh1989generalized}. Specifically, suppose $Y_i$ has density $\exp(\eta_i Y_i - \phi(\eta_i)) h(Y_i)$, where $\phi: \bbR \mapsto \bbR$ is a given function and $\eta_i$ is linked with $X_i$ via $\eta_i = X_i^\top \theta^*$. Then given i.i.d. data $\calZ = \{ (X_i, Y_i) \}_{i=1}^n$, the gradient, and the Hessian of the negative log-likelihood function with ridge regularizer are given as 
	\begin{equation*}
		\begin{split}
			\nabla_\theta F(\calZ; \theta) &= -  \sum_{i=1}^n \left( Y_i - \phi'(\theta^\top X_i)  \right) X_i + \lambda \theta, \\
			\nabla^2_{\theta} F(\calZ; \theta) & =  \sum_{i=1}^n \phi''(\theta^\top X_i) X_i X_i^\top + \lambda \I_p.
		\end{split}
	\end{equation*} 
	Since $\bbE (Y_i) = \phi'(\theta^{*\top} X_i)$ and $\var(Y_i) = \phi''(\theta^{*\top} X_i)> 0$, we expect the first and second derivatives of the objective will be well-controlled when $\theta$ is close to $\theta^*$, the $X_i$s are bounded, and $n \geq p$. For many non-convex problems of interest, if we could have a warm-start initialization, then the problem at local typically satisfies certain restricted strong convexity and smoothness properties, and the Hessians and gradients along the iterates are often well-controlled as well \citep{loh2013regularized,chi2019nonconvex}.

\subsection{Guarantees for SGD}
In the SGD setting, we assume the batches $\{S_t\}_{t \geq 1}$ are drawn i.i.d. such that each data point is included in $S_t$ with probability $K/n$. To establish the approximation guarantee for IACV in this setting, we need the following three assumptions, which are stochastic analogues of Assumptions \ref{asm: hessian-condition}--\ref{asm: third-derivative-control}.
\begin{assumption}\label{asm: sgd-hessian-condition} %[Hessian Control in SGD]
	Suppose $F(\cdot; \theta)$ is twice differentiable in $\theta$ and there exists $\lambda_0, \lambda_1 > 0$ such that for all $t \geq 1$, $i \in [n]$, and $\alpha_t \leq \frac{1}{K \lambda_1}$, we have
 \begin{equation*}
 \begin{split}
        &\bbE (\|  [\I_p - \alpha_t \nabla^2_\theta F (\calZ_{S_t\setminus \{i \} }; \htheta^{(t-1)} )  ][\htheta^{(t-1)} - \htheta^{(t-1)}_{-i}]  \|_2 )
        \leq  (1 - \alpha_t K \lambda_0 ) \bbE \left[ \|\htheta^{(t-1)} - \htheta^{(t-1)}_{-i} \|_2 \right],
 \end{split}
 \end{equation*} where the expectation is taken over randomly drawn batches $S_1, \ldots, S_t$.  
\end{assumption} 
\noindent For example, this assumption would be satisfied under the stronger assumption that for all $ v, \theta \in \bbR^p$, all $i\in[n]$, and for a randomly drawn batch $S$,
 \begin{equation*}
     \bbE_S (\|  (\I_p - \alpha_t \nabla^2_\theta F (\calZ_{S\setminus \{i \} }; \theta )  )v  \|_2 ) \leq (1 - \alpha_t K \lambda_0) \|v\|_2.
 \end{equation*}
 Essentially, we can interpret $K \lambda_0$ and $K \lambda_1$ as bounding the smallest and largest eigenvalues of $ \nabla^2_\theta F (\calZ_{S \setminus \{i\} }; \theta ) $. 

Our next assumptions bound the gradient and the Lipschitz constant of the Hessian.
\begin{assumption} \label{asm: sgd-gradient-control} %[Gradient Control in SGD]
	For all $ t \geq 1$ and $i \in [n]$, we have $ \bbE_{S_1,\ldots,S_{t-1}} \| \nabla_\theta \ell(Z_i; \htheta^{(t-1)} ) \|_2 \leq \eta_i $ for some $\eta_i > 0$. \end{assumption}

\begin{assumption}\label{asm: sgd-third-derivative-control} 
There exists $\gamma > 0$ such that for any $i \in [n]$ and $t \geq 1$,
\begin{equation*}
\begin{split}
	&\bbE \sup_{\theta'}\Big\|  (\nabla^2_\theta F (\calZ_{S_t\setminus \{ i\}}; \theta' ) - \nabla^2_\theta F (\calZ_{S_t\setminus \{ i\}}; \htheta^{(t-1)} ) )[\htheta^{(t-1)}_{-i} - \htheta^{(t-1)}] \Big\|_2  \leq \gamma K \bbE \|\htheta^{(t-1)}_{-i} - \htheta^{(t-1)}\|_2^2,
\end{split}
\end{equation*} where the supremum is taken over $\theta' = a \htheta^{(t-1)}_{-i} + (1-a) \htheta^{(t-1)}$ for any $a \in [0,1]$.
\end{assumption}
\noindent For example,  this assumption would be satisfied if given any $v, \theta_1, \theta_2 \in \bbR^p$,
\begin{equation*}
    \begin{split}
        &\bbE_S \| (\nabla^2_\theta F (\calZ_{S\setminus \{ i\}}; \theta_1 ) - \nabla^2_\theta F (\calZ_{S\setminus \{ i\}}; \theta_2 )v  \|_2 
        \leq  \gamma K \|v\|_2 \|\theta_1 - \theta_2\|_2.
    \end{split}
\end{equation*}

 In addition, we need an extra assumption as follows.
\begin{assumption} \label{asm: reverse-jensen}
	There exists $\beta \geq 1$ such that for all $t \geq 1$, we have $\bbE_{S_1, \ldots, S_{t-1}} \| \htheta^{(t-1)}_{-i} - \htheta^{(t-1)} \|_2^2 \leq \beta \frac{n}{K} \left( \bbE_{S_1,\ldots, S_{t-1}} \| \htheta^{(t-1)}_{-i} - \htheta^{(t-1)} \|_2  \right)^2$.
\end{assumption}
	 Assumption \ref{asm: reverse-jensen} can be viewed as a reverse Jensen's inequality, but there is an inflation factor $\beta n/K$ on the right-hand side. The order $n/K$ of this inflation factor is expected. A simple way to reveal this is to examine the inequality when $t = 2$. If $\hthetai^{(0)} = \htheta^{(0)}$,  then $\htheta^{(1)}_{-i} - \htheta^{(1)} = 0$ with probability $(1- K/n)$ and $\htheta^{(1)}_{-i} - \htheta^{(1)} = \alpha_1 \nabla_\theta \ell(Z_i; \htheta^{(0)} )$ with probability $K/n$ (depending on whether data point $i$ is excluded or included in the first batch $S_1$ at time $t=1$). Thus, we have $\bbE_{S_1} \| \htheta^{(1)}_{-i} - \htheta^{(1)} \|_2^2 = \frac{n}{K}(\bbE_{S_1} \| \htheta^{(1)}_{-i} - \htheta^{(1)} \|)^2 $. 

Now we are ready to present a guarantee on the approximation error of IACV in the SGD setting.
\begin{theorem}[Approximation Error of IACV for SGD] \label{th: SGD-case-approx-error}
	 Suppose $\htheta^{(0)} = \htheta^{(0)}_{-i} = \ttheta^{(0)}_{-i}$ for all $i \in [n]$, $\alpha_t \leq 1/ (K \lambda_0)$ for all $t \geq 1$ and Assumptions \ref{asm: sgd-hessian-condition}--\ref{asm: reverse-jensen} are satisfied with $\gamma \beta < 2 \lambda_0$ and $K \geq \frac{4 \|\eta\|_\infty }{2 \lambda_0 - \gamma \beta}$. Then for all $t \geq 1$, $i \in [n]$, we have \[\bbE \| \htheta^{(t)} - \htheta^{(t)}_{-i} \|_2 \leq \frac{2 \eta_i}{(2 \lambda_0 - \gamma\beta) n}\] and \[\bbE \|\ttheta_{-i}^{(t)} - \htheta^{(t)}_{-i}\|_2 \leq \frac{4 \gamma \beta \eta_i^2 }{\lambda_0 (2 \lambda_0 - \gamma\beta)^2 nK },\] where the expectation is taken over the randomly drawn batches $S_1,\dots,S_t$.
\end{theorem}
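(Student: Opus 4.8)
The plan is to mirror the proof of the GD result, Theorem~\ref{th: GD-case-approx-error}, but with every inequality carried in expectation over the random batches $S_1,\dots,S_t$, and to run the argument as a coupled induction on $t$: I would first control the ``baseline'' quantity $D_i^{(t)} := \bbE\|\htheta^{(t)} - \htheta^{(t)}_{-i}\|_2$, and then feed this bound into a recursion for the approximation error $E_i^{(t)} := \bbE\|\ttheta^{(t)}_{-i} - \htheta^{(t)}_{-i}\|_2$. Both quantities vanish at $t=0$ by the initialization hypothesis, which anchors the induction.

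To obtain the one-step recursions I would subtract the update maps and Taylor-expand $\nabla_\theta F(\calZ_{S_t\setminus\{i\}};\cdot)$ about $\htheta^{(t-1)}$. Subtracting the exact leave-one-out SGD update \eqref{eqn: iter-general-minus-i} from the full SGD update \eqref{eq: sgd-iterative-scheme} gives
\begin{equation*}
\htheta^{(t)} - \htheta^{(t)}_{-i} = [\I_p - \alpha_t\nabla^2_\theta F(\calZ_{S_t\setminus\{i\}};\htheta^{(t-1)})](\htheta^{(t-1)} - \htheta^{(t-1)}_{-i}) - \alpha_t\indc{i\in S_t}\nabla_\theta\ell(Z_i;\htheta^{(t-1)}) + \alpha_t r_i^{(t)},
\end{equation*}
while subtracting \eqref{eqn: iter-general-minus-i} from the IACV update \eqref{eqn: iter-IACV-SGD} makes the stochastic gradient term cancel, leaving
\begin{equation*}
\ttheta^{(t)}_{-i} - \htheta^{(t)}_{-i} = [\I_p - \alpha_t\nabla^2_\theta F(\calZ_{S_t\setminus\{i\}};\htheta^{(t-1)})](\ttheta^{(t-1)}_{-i} - \htheta^{(t-1)}_{-i}) + \alpha_t r_i^{(t)},
\end{equation*}
where in both lines $r_i^{(t)}$ is the second-order Taylor remainder of $\nabla_\theta F(\calZ_{S_t\setminus\{i\}};\cdot)$. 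I would then bound term by term in expectation: Assumption~\ref{asm: sgd-hessian-condition}, in the uniform-in-$v$ form noted after it (which is what lets us apply it to the direction $\ttheta^{(t-1)}_{-i}-\htheta^{(t-1)}_{-i}$ in the second line, not just to $\htheta^{(t-1)}-\htheta^{(t-1)}_{-i}$), contracts the linear-map term by the factor $1-\alpha_t K\lambda_0$; independence of $S_t$ from $\htheta^{(t-1)}$ together with Assumption~\ref{asm: sgd-gradient-control} gives $\bbE[\indc{i\in S_t}\|\nabla_\theta\ell(Z_i;\htheta^{(t-1)})\|_2]\le (K/n)\eta_i$; and Assumption~\ref{asm: sgd-third-derivative-control} gives $\bbE\|r_i^{(t)}\|_2\le \gamma K\,\bbE\|\htheta^{(t-1)}-\htheta^{(t-1)}_{-i}\|_2^2$, which Assumption~\ref{asm: reverse-jensen} (reverse Jensen, with its $\beta n/K$ inflation) converts into $\gamma\beta n\,(D_i^{(t-1)})^2$.

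These estimates yield the quadratic recursion $D_i^{(t)}\le (1-\alpha_t K\lambda_0)D_i^{(t-1)} + \alpha_t(K/n)\eta_i + \alpha_t\gamma\beta n(D_i^{(t-1)})^2$. I would prove by induction that $D_i^{(t)}\le \frac{2\eta_i}{(2\lambda_0-\gamma\beta)n}$: substituting the hypothesis into the right-hand side, the required inequality reduces exactly to $K\ge\frac{4\eta_i}{2\lambda_0-\gamma\beta}$, which is implied by $K\ge 4\|\eta\|_\infty/(2\lambda_0-\gamma\beta)$, while $\gamma\beta<2\lambda_0$ keeps the constants positive and $\alpha_t\le 1/(K\lambda_0)$ keeps the contraction factor in $[0,1)$. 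Plugging this bound into the remainder term linearizes the second recursion to $E_i^{(t)}\le (1-\alpha_t K\lambda_0)E_i^{(t-1)} + \frac{4\alpha_t\gamma\beta\eta_i^2}{(2\lambda_0-\gamma\beta)^2 n}$; unrolling from $E_i^{(0)}=0$ and using the elementary telescoping bound $\sum_{s=1}^{t}\alpha_s\prod_{u=s+1}^{t}(1-\alpha_u K\lambda_0)\le \frac{1}{K\lambda_0}$ gives $E_i^{(t)}\le \frac{4\gamma\beta\eta_i^2}{\lambda_0(2\lambda_0-\gamma\beta)^2 nK}$, the claimed bound.

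I expect the main obstacle to be the bookkeeping of the $K$-versus-$n$ scalings, so that the final denominator is $nK$ and not $n^2$ or $K^2$: the stochastic-gradient term supplies the factor $K/n$ in the $D$-recursion (giving order $1/n$), whereas the remainder term carries a $\gamma K$ from Assumption~\ref{asm: sgd-third-derivative-control} that must be traded against the $n/K$ inflation of the reverse-Jensen step (Assumption~\ref{asm: reverse-jensen}), and only this precise cancellation produces the $1/(nK)$ rate for $E_i^{(t)}$. Secondary care is needed at two points: ensuring the quadratic $D$-recursion does not run away (this is exactly what the lower bound on $K$ buys, as in the GD analysis), and justifying the contraction for the $E$-recursion along the data- and iteration-dependent direction $\ttheta^{(t-1)}_{-i}-\htheta^{(t-1)}_{-i}$, which is why the stronger form of Assumption~\ref{asm: sgd-hessian-condition} is invoked there.
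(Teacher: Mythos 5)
Your proposal is correct and follows essentially the same route as the paper's proof: the same subtraction of update rules with a Taylor expansion of $\nabla_\theta F(\calZ_{S_t\setminus\{i\}};\cdot)$ about $\htheta^{(t-1)}$, the same term-by-term application of Assumptions~\ref{asm: sgd-hessian-condition}--\ref{asm: reverse-jensen} (yielding the identical recursions for $D_i^{(t)}$ and $E_i^{(t)}$), and the same induction closing via $K\ge 4\|\eta\|_\infty/(2\lambda_0-\gamma\beta)$. The only cosmetic differences are that you solve the $E$-recursion by unrolling with the telescoping bound rather than by a second fixed-point induction, and that you explicitly invoke the uniform-in-$v$ form of Assumption~\ref{asm: sgd-hessian-condition} to contract along the direction $\ttheta^{(t-1)}_{-i}-\htheta^{(t-1)}_{-i}$ --- a point the paper handles only implicitly by saying ``by a similar argument.''
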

\noindent By Theorem \ref{th: SGD-case-approx-error}, the expected approximation error in the SGD case is bounded as \[\bbE \left(\textnormal{Err}_{\textnormal{approx}}^{(t)} \right)\leq 
\frac{4 \gamma \beta \|\eta\|^2_\infty}{\lambda_0 (2 \lambda_0 - \gamma)^2 n^2 }\textnormal{ for all $t\geq 1$.}\]

Interestingly, we find that the ``baseline'' approximation error $\bbE \| \htheta^{(t)} - \htheta^{(t)}_{-i} \|_2$ does not depend on $K$ and is at the same order $1/n$ as the one in Theorem \ref{th: GD-case-approx-error}. In contrast, the IACV error $\bbE \|\ttheta_{-i}^{(t)} - \htheta^{(t)}_{-i}\|_2$ scales at the order of $1/nK$, as compared to order $1/n^2$ in the previous setting. This factor comes from the inflation factor $n/K$ in Assumption \ref{asm: reverse-jensen}; as discussed immediately below this assumption, the factor $n/K$ cannot be removed if we expect the assumption to hold in practice for all $t\geq 1$, but a remaining open question is whether the factor $n/K$ can be removed if we only require it to hold for all $t\geq T_0$ for some large $T_0$. If this is the case, then (for sufficiently large $t$) the IACV error in Theorem~\ref{th: SGD-case-approx-error} will scale as $1/n^2$ rather than $1/nK$.

Finally, we provide a CV error guarantee under the SGD setting, with one additional assumption: 
\begin{assumption} \label{asm:SGD-CV-error}
    Suppose for any $i \in [n]$ and all $t \geq 1$, 
    there exists $\eta'_i > 0$ such that
  \begin{equation*}
     \bbE \left\{ \sup_{a\in[0,1]} |\nabla_\theta \ell(Z_i; a \ttheta^{(t)}_{-i} + (1-a) \htheta^{(t)}_{-i}  )^\top ( \ttheta^{(t)}_{-i}  -   \htheta^{(t)}_{-i})| \right\} \leq \eta'_i \bbE \|\ttheta^{(t)}_{-i}  -   \htheta^{(t)}_{-i}\|_2.
  \end{equation*}
\end{assumption}
\noindent For example, this assumption will be satisfied if $\ell(Z_i,\cdot)$ is $\eta'_i$-Lipschitz for each $i$.
\begin{theorem}[CV Error of IACV for SGD] \label{th: model-assessment-bound-SGD} 
If the assumptions in Theorem \ref{th: SGD-case-approx-error} and Assumption \ref{asm:SGD-CV-error} are satisfied, then for all $t \geq 1$:
   \begin{equation*}
 \begin{split}
  \bbE\left(\textnormal{Err}_{\textnormal{CV}}^{(t)} \right) \leq \frac{1}{n} \sum_{i=1}^n
 \frac{4\gamma \beta \eta'_{i} \eta_{i}^2  }{\lambda_0 (2 \lambda_0 - \gamma\beta)^2 nK },
 \end{split}
 \end{equation*} where the expectation is taken over the randomly drawn batches $S_1,\dots,S_t$.
\end{theorem}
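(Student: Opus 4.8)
The plan is to control the CV error pointwise in $i$ by a first-order Taylor expansion of the loss along the segment joining $\htheta^{(t)}_{-i}$ and $\ttheta^{(t)}_{-i}$, and then to plug in the approximation-error bound already established in Theorem~\ref{th: SGD-case-approx-error}. This is the SGD analogue of the argument behind Theorem~\ref{th: model-assessment-bound}, with expectations over the random batches $S_1,\dots,S_t$ inserted at the appropriate place.

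First I would unfold the definition and use the triangle inequality:
\[
\textnormal{Err}_{\textnormal{CV}}^{(t)} = \Big|\frac1n\sum_{i=1}^n\big(\ell(Z_i;\ttheta^{(t)}_{-i}) - \ell(Z_i;\htheta^{(t)}_{-i})\big)\Big| \le \frac1n\sum_{i=1}^n\big|\ell(Z_i;\ttheta^{(t)}_{-i}) - \ell(Z_i;\htheta^{(t)}_{-i})\big|.
\]
Next, for each $i$ I would write the difference of losses via the fundamental theorem of calculus,
\[
\ell(Z_i;\ttheta^{(t)}_{-i}) - \ell(Z_i;\htheta^{(t)}_{-i}) = \int_0^1 \nabla_\theta\ell\big(Z_i; a\ttheta^{(t)}_{-i} + (1-a)\htheta^{(t)}_{-i}\big)^\top\big(\ttheta^{(t)}_{-i} - \htheta^{(t)}_{-i}\big)\,da,
\]
so that the absolute value of the integral is at most $\sup_{a\in[0,1]}\big|\nabla_\theta\ell(Z_i; a\ttheta^{(t)}_{-i} + (1-a)\htheta^{(t)}_{-i})^\top(\ttheta^{(t)}_{-i} - \htheta^{(t)}_{-i})\big|$. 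Taking expectations over $S_1,\dots,S_t$ and applying Assumption~\ref{asm:SGD-CV-error} term by term gives
\[
\bbE\big(\textnormal{Err}_{\textnormal{CV}}^{(t)}\big) \le \frac1n\sum_{i=1}^n \eta'_i\,\bbE\|\ttheta^{(t)}_{-i} - \htheta^{(t)}_{-i}\|_2,
\]
and substituting $\bbE\|\ttheta^{(t)}_{-i} - \htheta^{(t)}_{-i}\|_2 \le \frac{4\gamma\beta\eta_i^2}{\lambda_0(2\lambda_0-\gamma\beta)^2 nK}$ from Theorem~\ref{th: SGD-case-approx-error} yields exactly the claimed bound.

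The one point that requires care is that the increment $\ttheta^{(t)}_{-i} - \htheta^{(t)}_{-i}$ and the gradient evaluated along the segment are both random functions of the batches and are correlated, so one cannot naively factor $\bbE$ of their inner product into a product of expectations or bound it by a deterministic Lipschitz constant times $\bbE\|\cdot\|$. This is precisely why Assumption~\ref{asm:SGD-CV-error} is stated in the product form $\bbE\{\sup_a|\cdots|\}\le\eta'_i\,\bbE\|\ttheta^{(t)}_{-i}-\htheta^{(t)}_{-i}\|_2$ rather than as a uniform gradient bound — it absorbs this correlation directly, so no decoupling or Cauchy–Schwarz step is needed. Beyond that bookkeeping, I expect no substantive obstacle: the argument is short and reduces entirely to Theorem~\ref{th: SGD-case-approx-error} and the mean-value identity.
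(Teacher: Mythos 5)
Your proposal is correct and follows essentially the same route as the paper's proof: a first-order expansion of $\ell(Z_i;\cdot)$ along the segment between $\ttheta^{(t)}_{-i}$ and $\htheta^{(t)}_{-i}$, a bound by the supremum over $a\in[0,1]$, then Assumption~\ref{asm:SGD-CV-error} followed by the approximation bound from Theorem~\ref{th: SGD-case-approx-error}. The only cosmetic difference is that you use the integral form of the remainder where the paper invokes the mean-value form, and your remark about why the assumption is stated in product form (to absorb the correlation between the gradient and the increment) is an accurate reading of its role.
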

\noindent As before, this result is 
analogous to Theorem~\ref{th: model-assessment-bound} for the GD setting, but with rate $1/nK$ in place of $1/n^2$.

%%%%%%%%%%%%%%%%%%%%%%%%%%%%%%%%%%%%%%%%%%%%
\section{Limiting Behavior of IACV} \label{sec: limiting-behavior}
%%%%%%%%%%%%%%%%%%%%%%%%%%%%%%%%%%%%%%%%%%%%
As we have seen in Section \ref{sec: approximation-property}, IACV achieves accurate approximation along all iterations of the algorithm under some regularity conditions. Thus, in the setting when $\htheta^{(t)}$ indeed converges to $\htheta$, IACV also has the same approximation properties in the limit, and its approximation error bound is comparable to the guarantee of the one-step Newton (NS) estimator in \eqref{eqn:NS-intro} as we have mentioned in Section \ref{sec: guarantee-for-GD}. In this section, we show this is not a coincidence---in fact, there is a close connection of the proposed estimator $\ttheta^{(t)}_{-i}$ to the NS estimator when $t \to \infty$ and the algorithm converges to $\htheta$.

\begin{theorem}%[Limiting Behavior of IACV for GD] 
\label{th: limiting-behavior-estimator-GD} 
Suppose $\htheta^{(t)}$ converges to $\htheta$, Assumption \ref{asm: hessian-condition} is satisfied along all iterations, $\alpha_t = \alpha < 1/(n \lambda_1) $ for all $t \geq 1$, and $\nabla^2_\theta F(\calZ_{-i}; \theta) $, $ \nabla_\theta F (\calZ_{-i}; \theta)$ are continuous in $\theta$ for all $i \in [n]$. Then $\ttheta^{(t)}_{-i}$ defined in \eqref{eqn: iter-IACV-GD} converges to $\ttheta^{\NS}_{-i}$ in \eqref{eqn:NS-intro}, as $t\rightarrow\infty$.
\end{theorem}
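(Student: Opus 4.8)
The plan is to track the discrepancy $v^{(t)} := \ttheta^{(t)}_{-i} - \htheta^{(t)}$ and show that it obeys a time-varying linear recursion which is a contraction perturbed by a vanishing error term; such a recursion converges to the fixed point of the limiting linear map, and that fixed point will turn out to be exactly $\ttheta^{\NS}_{-i} - \htheta$.

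First I would subtract the full-data GD update \eqref{eq: gd-iterative-scheme} from the IACV update \eqref{eqn: iter-IACV-GD}. Writing $\alpha_t \equiv \alpha$, $H_{t-1} := \nabla^2_\theta F(\calZ_{-i};\htheta^{(t-1)})$ and $b_{t-1} := \nabla_\theta F(\calZ_{-i};\htheta^{(t-1)}) - \nabla_\theta F(\calZ;\htheta^{(t-1)})$, this gives the recursion
\[
v^{(t)} = \bigl(\I_p - \alpha H_{t-1}\bigr)\, v^{(t-1)} - \alpha\, b_{t-1}.
\]
Since $\htheta^{(t)} \to \htheta$ and the gradient of $F(\calZ;\cdot)$ is continuous, $\alpha\,\nabla_\theta F(\calZ;\htheta^{(t-1)}) = \htheta^{(t-1)} - \htheta^{(t)} \to 0$, hence $\nabla_\theta F(\calZ;\htheta) = 0$; together with the assumed continuity of $\nabla_\theta F(\calZ_{-i};\cdot)$ and $\nabla^2_\theta F(\calZ_{-i};\cdot)$ this yields $H_{t-1} \to H := \nabla^2_\theta F(\calZ_{-i};\htheta)$ and $b_{t-1} \to b := \nabla_\theta F(\calZ_{-i};\htheta)$. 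Next I would record the uniform bounds implied by Assumption~\ref{asm: hessian-condition} together with $\alpha < 1/(n\lambda_1)$: each symmetric matrix $H_{t-1}$ has eigenvalues in $[n\lambda_0,n\lambda_1]$, so $\I_p - \alpha H_{t-1}$ is symmetric with operator norm at most $\rho := 1 - \alpha n\lambda_0 \in (0,1)$; by continuity the same eigenvalue bounds hold for $H$, so $H$ is invertible with $\|H^{-1}\|_{\mathrm{op}} \le 1/(n\lambda_0)$.

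I would then set $v^\star := -H^{-1} b = -\bigl(\nabla^2_\theta F(\calZ_{-i};\htheta)\bigr)^{-1}\nabla_\theta F(\calZ_{-i};\htheta)$, which is the unique fixed point of the limiting map $v \mapsto (\I_p - \alpha H)v - \alpha b$, and which by \eqref{eqn:NS-intro} satisfies $\htheta + v^\star = \ttheta^{\NS}_{-i}$. Subtracting the fixed-point identity from the recursion gives
\[
v^{(t)} - v^\star = \bigl(\I_p - \alpha H_{t-1}\bigr)\bigl(v^{(t-1)} - v^\star\bigr) - \alpha (H_{t-1} - H) v^\star - \alpha (b_{t-1} - b),
\]
so $\|v^{(t)} - v^\star\|_2 \le \rho\,\|v^{(t-1)} - v^\star\|_2 + \epsilon_t$ with $\epsilon_t := \alpha\|H_{t-1} - H\|_{\mathrm{op}}\|v^\star\|_2 + \alpha\|b_{t-1} - b\|_2 \to 0$. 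Unrolling, $\|v^{(t)} - v^\star\|_2 \le \rho^t\|v^{(0)} - v^\star\|_2 + \sum_{k=1}^{t}\rho^{t-k}\epsilon_k$; splitting the sum at $k = \lfloor t/2\rfloor$ (the low-$k$ part is $\le \rho^{t/2}\sum_{k\ge 1}\rho^{-k}\epsilon_k$-type... more simply, bounded by $\rho^{\lceil t/2\rceil}\,\frac{\sup_k\epsilon_k}{1-\rho}$, and the high-$k$ part by $\frac{\sup_{k > t/2}\epsilon_k}{1-\rho}$) shows $\|v^{(t)} - v^\star\|_2 \to 0$. Hence $\ttheta^{(t)}_{-i} = \htheta^{(t)} + v^{(t)} \to \htheta + v^\star = \ttheta^{\NS}_{-i}$, as claimed.

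The only step that is more than a direct computation is the convergence lemma for the perturbed contraction, specifically handling the fact that the matrices $\I_p - \alpha H_{t-1}$ vary with $t$; but since their operator norms are uniformly bounded by $\rho < 1$ (thanks to Assumption~\ref{asm: hessian-condition} and the step-size condition), the standard telescoping/Cesàro argument applies verbatim. The remaining ingredients — the recursion for $v^{(t)}$, the invertibility of the limiting Hessian $H$, and the identification $\htheta + v^\star = \ttheta^{\NS}_{-i}$ — follow immediately from continuity and the stated hypotheses.
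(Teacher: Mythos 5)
Your proof is correct, and it runs on the same engine as the paper's: the uniform contraction $\|\I_p - \alpha\nabla^2_\theta F(\calZ_{-i};\htheta^{(t-1)})\|\le 1-\alpha n\lambda_0<1$ from Assumption~\ref{asm: hessian-condition} plus a perturbation that vanishes by continuity, followed by the standard unrolling of a perturbed contraction. The only real difference is bookkeeping. The paper centers the recursion on the moving target $b_i^t = \htheta^{(t-1)} - (\nabla^2_\theta F(\calZ_{-i};\htheta^{(t-1)}))^{-1}\nabla_\theta F(\calZ_{-i};\htheta^{(t-1)})$, obtaining the exact identity $\ttheta^{(t)}_{-i}-b_i^t = (\I_p-\alpha H_{t-1})(\ttheta^{(t-1)}_{-i}-b_i^t)$ and then using $b_i^t\to\ttheta^{\NS}_{-i}$; you instead center on $\htheta^{(t)}$ and identify the fixed point $v^\star=-H^{-1}b$ of the limiting affine map. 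Your route costs you one extra step that the paper's choice of target sidesteps: you must argue $\nabla_\theta F(\calZ;\htheta)=0$ (equivalently, that $b\to-$ the leave-one-out gradient alone) in order to match $\htheta+v^\star$ with $\ttheta^{\NS}_{-i}$, since the theorem only assumes $\htheta^{(t)}\to\htheta$ rather than that $\htheta$ is a stationary point. Your justification --- constant step size forces $\nabla_\theta F(\calZ;\htheta^{(t-1)})=(\htheta^{(t-1)}-\htheta^{(t)})/\alpha\to 0$ --- is valid (and in fact you only need $\nabla_\theta F(\calZ;\htheta^{(t-1)})\to 0$, not continuity of the full-data gradient, to conclude $b_{t-1}\to\nabla_\theta F(\calZ_{-i};\htheta)$), so the argument goes through. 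What your version buys in exchange is that you never need to invert $H_{t-1}$ along the trajectory, only the limiting Hessian $H$.
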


Moreover, in Theorem \ref{th: limiting-behavior-estimator-prox-GD}, we will show similar guarantees continue to hold in the ProxGD setting. (In the SGD setting, however, the techniques for proving Theorem \ref{th: limiting-behavior-estimator-GD} fail---we will give an intuition for why this is the case, in Appendix \ref{proof-sec: limiting-behavior}.)

In view of the results in Theorem \ref{th: limiting-behavior-estimator-GD}, we can regard IACV as an extension of $\ttheta^{\NS}_{-i}$ in \eqref{eqn:NS-intro} to iterative solvers with provable per-iteration guarantees. This provides a safe and efficient way to approximate CV in practice where it is often agnostic whether the algorithm will converge to the solution of ERM or not.

%%%%%%%%%%%%%%%%%%%%%%%%%%%%%%%%%%%%%%%%%%%%
\section{Simulation Studies} \label{sec: simulation}
%%%%%%%%%%%%%%%%%%%%%%%%%%%%%%%%%%%%%%%%%%%%
In this section, we conduct numerical studies to investigate the empirical performance of the proposed IACV method and to verify our theoretical findings for GD, SGD, and ProxGD.\footnote{Code to reproduce all experiments is available at \url{https://github.com/yuetianluo/IACV}.} The data is generated from a logistic regression model, with $Z_i=(X_i,Y_i)\in\bbR^p\times\{0,1\}$, with dimension $p=20$ and with $X_i$ drawn with i.i.d.~$N(0,1)$ entries, while 
\[Y_i \sim\textnormal{Bernoulli}(\exp(X_i^\top\theta^*)/(1+\exp(X_i^\top\theta^*)))\]
for true parameter vector $\theta^*$ which has 5 randomly chosen nonzero entries drawn as $N(0,1)$, and all other entries zero. 
Our objective function is given by regularized negative log-likelihood, $F(\calZ;\theta) = \sum_{i=1}^n \ell(Z_i; \theta) + \lambda\pi(\theta)$, where
\begin{equation*}
	\ell(Z_i; \theta) =  -  Y_i \cdot X_i^\top \theta + \log( 1+\exp(X_i^\top\theta)).
\end{equation*}
For GD and SGD, we use ridge regularization, with $\pi(\theta)=\|\theta\|^2_2$ and penalty parameter $\lambda=10^{-6}\cdot n$. For ProxGD we instead use the logistic Lasso, with $\pi(\theta)=\|\theta\|_1$ and $\lambda =10^{-6}\cdot n$. We initialize the algorithm at the origin.
Each simulation study is repeated for $100$ independent trials.

Our proposed method is given by the IACV estimator $\ttheta^{(t)}_{-i}$, defined in~\eqref{eqn: iter-IACV-GD} for GD and in~\eqref{eqn: iter-IACV-SGD} for SGD. For comparison,
we also implement the one-step Newton (NS) and infinitesimal jackknife (IJ) estimators along the optimization path, i.e., we use the $t$th iteration $\htheta^{(t)}$ in place of the true minimizer $\htheta$ in the definition of the NS~\eqref{eqn:NS-intro} or IJ~\eqref{eqn:IJ-intro} estimators, leading to the approximate NS estimator
$$\ttheta^{\NS(t)}_{-i} = \htheta^{(t)} - \left(\nabla^2_{\theta} F (  \calZ_{-i}; \htheta^{(t)} ) \right)^{-1} \nabla_{\theta} F ( \calZ_{-i}; \htheta^{(t)} )$$
and similarly the approximate IJ estimator
$$\ttheta^{\IJ(t)}_{-i} = \htheta^{(t)} - \left(\nabla^2_{\theta} F (  \calZ; \htheta^{(t)} ) \right)^{-1} \nabla_{\theta} F ( \calZ_{-i}; \htheta^{(t)} ).$$
Finally, we also compare to a ``baseline'' estimator where we simply approximate the leave-one-out iterate $\htheta^{(t)}_{-i}$ with the full-data iterate $\htheta^{(t)}$:
\begin{equation}\label{eq: baseline}
\htheta^{\textnormal{baseline}(t)}_{-i} = \htheta^{(t)}.\end{equation}
Of course, this baseline is not useful in practice (since data point $i$ has not actually been removed from the estimator), and thus any proposed method is only meaningful if it can perform substantially better than this baseline.
We measure the accuracy of the four methods (baseline, NS, IJ, and IACV) in terms of the averaged approximation error $\textnormal{Err}_{\textnormal{approx}}^{(t)}$ defined in~\eqref{eqn: define-approx-error}, and the relative CV error, $\textnormal{RelErr}^{(t)}_{\textnormal{CV}} = \textnormal{Err}_{\textnormal{CV}}^{(t)}/
\CV(\{\htheta^{(t)}_{-i}\}_{i=1}^n)$, where $\textnormal{Err}_{\textnormal{CV}}^{(t)}$ is defined in~\eqref{eqn: define-model-assessment-error}.

%%%%%%%%%%%%%%%%%%%%%%%%%%%%%%%%%%%%%%%%%%%%
\subsection{Gradient Descent (GD)} \label{sec: simulation-GD}
%%%%%%%%%%%%%%%%%%%%%%%%%%%%%%%%%%%%%%%%%%%%
For the gradient descent simulation, we consider sample sizes $n=250$ and $n=1000$, and $\alpha_t = 0.5/n$.
In the top panels of Figure \ref{fig: approximation-error-GD}, we can see that along the iterations of the algorithm, the approximation error of IACV is always better than NS and IJ before the convergence of the algorithm. As a result of that, IACV also achieves better CV error as we illustrate in the middle panels of Figure \ref{fig: approximation-error-GD}. In fact, during early iterations, the error of both NS and IJ is higher than the (noninformative) baseline method~\eqref{eq: baseline}, while IACV's error is substantially lower, and also shows a smaller variance in estimation as it has narrower shaded areas. When the algorithm converges,  we find that the performance of IACV and the NS estimator are almost the same, which matches the theoretical prediction in Theorem \ref{th: limiting-behavior-estimator-GD} (where we see that, under mild assumptions, IACV will converge to the NS estimator, i.e., the black line and the red line will meet in the limit), while the IJ method shows higher error even at convergence. 

In addition, we observe that as sample size $n$ increases from $250$ to $1000$, the limiting approximation error of IACV decreases from $1.5 \times 10^{-3}$ to $6.8 \times 10^{-5}$. This roughly matches what we have shown in Theorem \ref{th: GD-case-approx-error} that the approximation error decreases quadratically with respect to the sample size. 

Finally, in the bottom panels of Figure \ref{fig: approximation-error-GD}, we report the runtime of IACV as compared to the exact leave-one-out CV. We can see that IACV is much faster than the exact leave-one-out CV method; in particular when $n = 1000$, IACV shows approximately 6-7 times speed-up. A larger scale simulation for GD is provided in Appendix \ref{app: large-scale-gd}.
\begin{figure}[t]
	\centering
	\includegraphics[width = 0.85\textwidth]{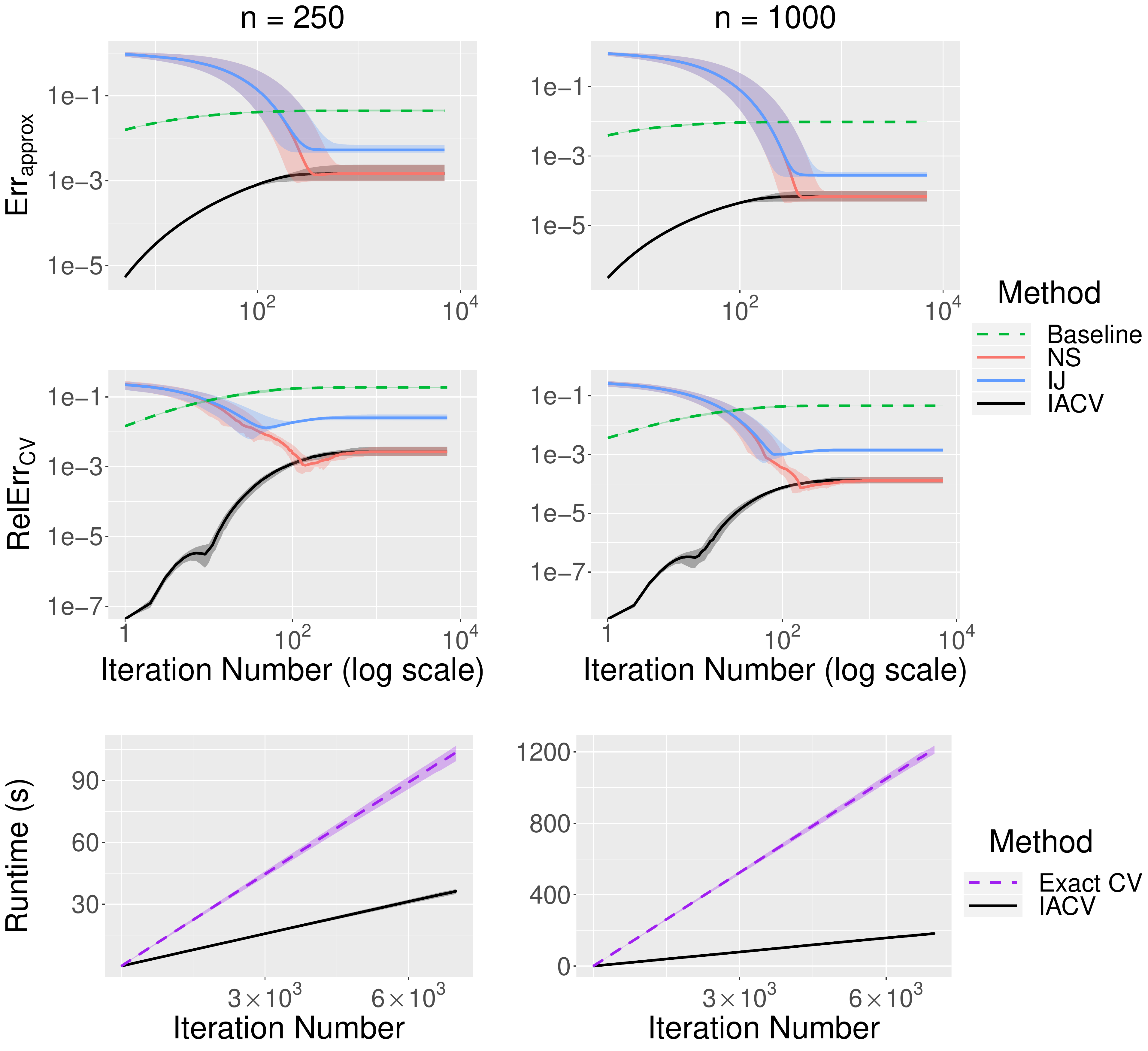} 
 %\vspace{-.2in}
	\caption{Comparison of the baseline, NS, IJ, and IACV methods in gradient descent for logistic regression with ridge regularizer. Top: approximation error comparison; middle: relative CV error comparison; bottom: runtime comparison of exact leave-one-out CV and IACV.  Solid lines represent the median value over $100$ experiments and shaded areas denote the region between lower and upper quartiles.} \label{fig: approximation-error-GD}
\end{figure}

%%%%%%%%%%%%%%%%%%%%%%%%%%%%%%%%%%%%%%%%%%%%
\subsection{Stochastic Gradient Descent (SGD)} \label{sec: simulation-SGD}
%%%%%%%%%%%%%%%%%%%%%%%%%%%%%%%%%%%%%%%%%%%%

For the stochastic gradient descent simulation, we take sample size $n=1000$, and test batch size $K=100$ and $K=400$. We choose $\alpha_t$ based on a common strategy called ``epoch doubling'' in the literature, where we run $T_0=1000$ steps with step size $\alpha =0.5/K$, then run $2T_0$ steps with step size $\alpha/2$, and so on.

We plot the accuracy of the different methods in Figure \ref{fig: approximation-error-SGD}, in the top (approximation error) and middle (CV error) panels. We can see that IACV has a clear advantage over the other methods. In contrast to the simulation results in Figure \ref{fig: approximation-error-GD} for GD, here the red line (for NS) does not reach the black line (for IACV) even after $T=10^5$ iterations, in terms of approximation error. This may be due to the slow convergence of SGD;
nonetheless, it is still unclear whether the NS estimator $\ttheta^{\NS(t)}_{-i}$ and our estimator $\ttheta^{(t)}_{-i}$ will converge to the same limit or not, since we do not know whether a result analogous to Theorem \ref{th: limiting-behavior-estimator-GD} holds in the setting of SGD. As for GD, we see that IACV offers error far lower than both NS and IJ during early iterations, when NS and IJ show error higher even than the baseline estimator.  

Finally, we show the runtime of IACV with exact leave-one-out CV in the bottom panel in Figure \ref{fig: approximation-error-SGD}. We can see IACV still has clear computational advantages in this setting. 

\begin{figure}[t]
	\centering
	\includegraphics[width = 0.85\textwidth]{./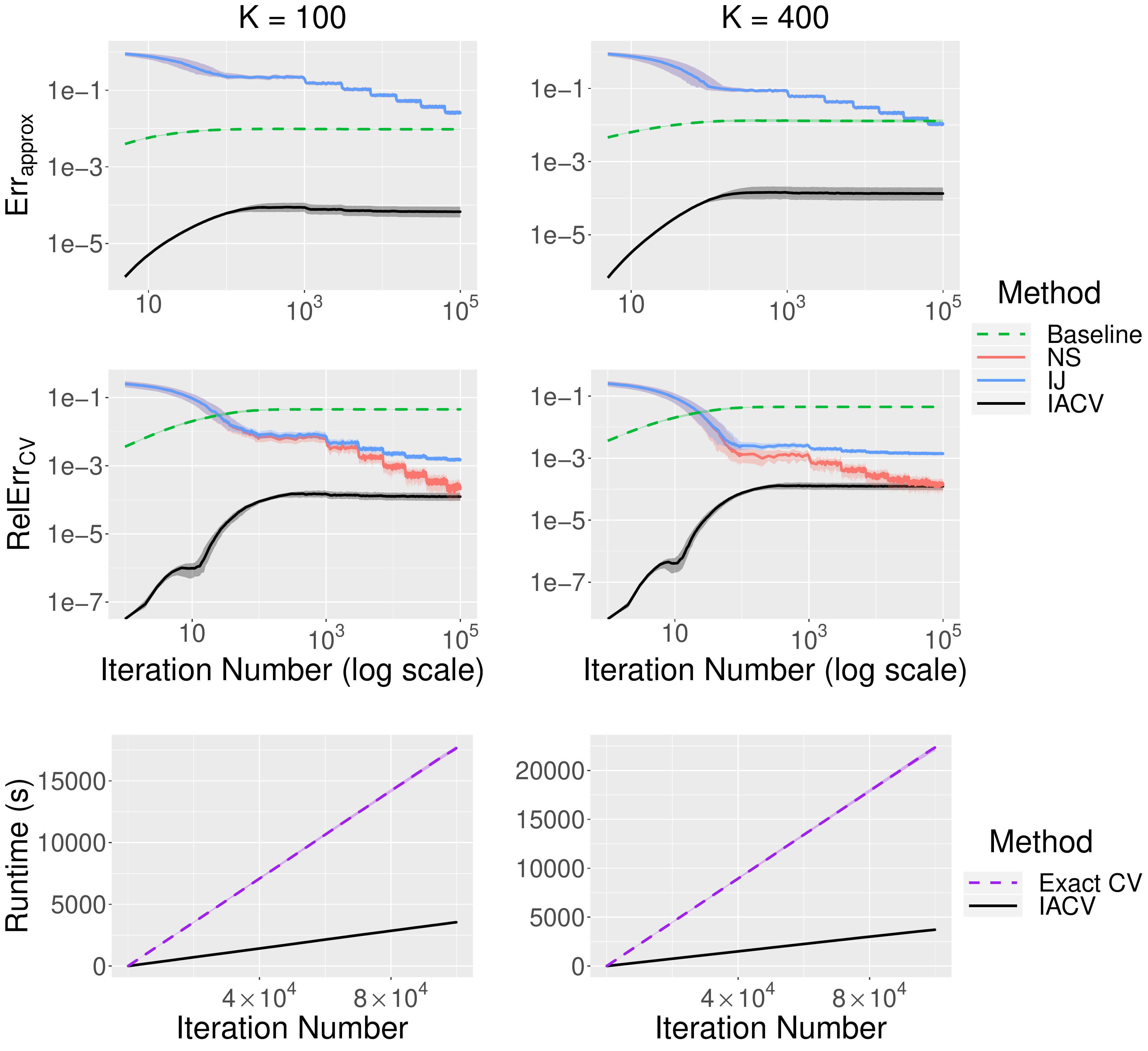} 
% \vspace{-0.2in}
	\caption{Comparison of the baseline, NS, IJ, and IACV methods in stochastic gradient descent for logistic regression with ridge regularizer; details of the panels are the same as in Figure~\ref{fig: approximation-error-GD}. Note that in the top panels, the results for NS and IJ (red and blue) are nearly perfectly overlapping and thus difficult to distinguish.
 } \label{fig: approximation-error-SGD}
\end{figure}

%%%%%%%%%%%%%%%%%%%%%%%%%%%%%%%%%%%%%%%%%%%%
\subsection{Proximal Gradient Descent (ProxGD)} \label{sec: simulation-ProxGD}
%%%%%%%%%%%%%%%%%%%%%%%%%%%%%%%%%%%%%%%%%%%%

Finally, for the proximal gradient descent simulation, we take sample sizes $n=250$ and $n=1000$, and $\alpha_t = 0.5/n$, as for the GD simulation. The regularizer is now $\pi(\theta)=\|\theta\|_1$, a nonsmooth function (see Appendix~\ref{app: prox-GD-theory} for the definition of the NS and IJ methods in this nonsmooth setting). 

The results for this setting are qualitatively very similar to the GD setting; again, we see that IACV shows good accuracy in terms of both approximation error and CV error even at early iterations, while NS and IJ show error higher than the baseline during early iterations. Finally, when the algorithm converges, the performance of our estimator is similar to the NS estimator as predicted in Theorem \ref{th: limiting-behavior-estimator-prox-GD} (which is the analogue of Theorem~\ref{th: limiting-behavior-estimator-GD}, for the ProxGD setting).

\begin{figure}[t]
	\centering
	\includegraphics[width = 0.85\textwidth]{./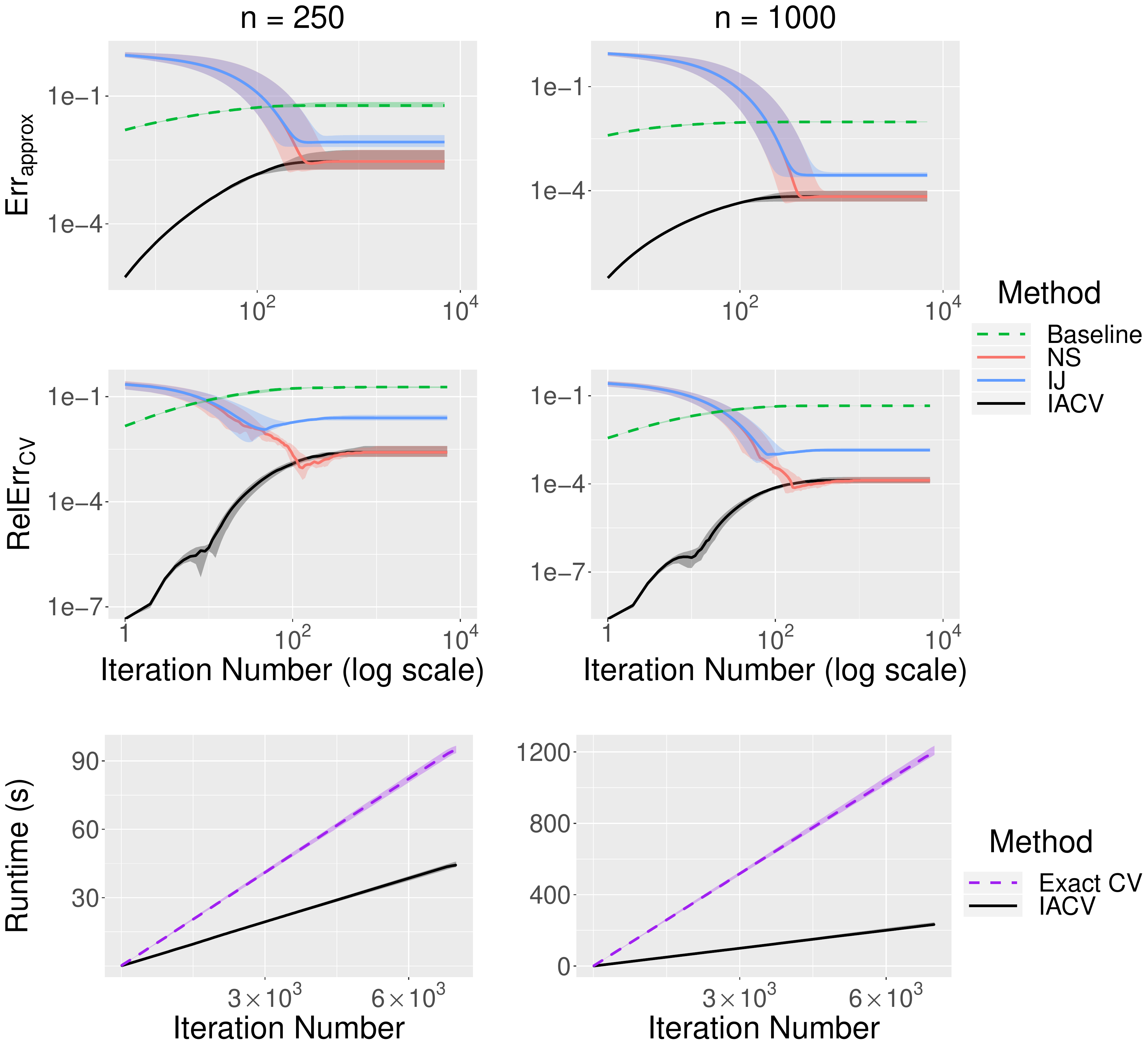} 
	\caption{Comparison of the baseline, NS, IJ, and IACV methods in proximal gradient descent for logistic regression with $\ell_1$ regularizer (the logistic Lasso); details of the panels are the same as in Figure~\ref{fig: approximation-error-GD}.} \label{fig: approximation-error-proxGD}
\end{figure}

 %%%%%%%%%%%%%%%%%%%%%%%%%%%%%%%
 \section{Conclusion and Discussions}\label{sec: conclusion}
 %%%%%%%%%%%%%%%%%%%%%%%%%%%%%%%
In this paper, we provide a new method, iterative approximate cross-validation (IACV), to efficiently approximate the computationally expensive leave-one-out CV under the ERM framework when the problem is solved by common iterative algorithms. IACV achieves efficient CV approximation along the whole trajectory of the algorithm, which is in big contrast to the existing CV approximation methods.

This work suggests several interesting directions for further exploration. For example, Theorem \ref{th: limiting-behavior-estimator-GD} establishes that, for gradient descent, the IACV and NS estimators coincide at convergence; we do not yet know whether an analogous result holds for SGD. We can also consider the performance of IACV in a high-dimensional setting. For instance, in regression with $p>n$,
the NS estimator defined in \eqref{eqn:NS-intro} fails due to high dimensionality, but \citet{stephenson2020approximate} propose running $\ell_1$-regularized regression and then running the NS estimator on the selected active set of $\htheta$. It would be interesting to see whether we can adapt the IACV method to that setting and provide theoretical guarantees for the sparse high-dimensional regime.

\section*{Acknowledgements}
Z.R. and R.F.B were supported by the Office of Naval Research via grant N00014-20-1-2337. R.F.B. was additionally supported by the National Science Foundation via grants DMS-1654076 and DMS-2023109.

%%%%%%%%%%%%%%%%%%%%%%%%%%%%%%%
%%% Bibliography
%%%%%%%%%%%%%%%%%%%%%%%%%%%%%%%
\bibliographystyle{apalike}
\bibliography{ref}

\appendix
%%%%%%%%%%%%%%%%%%%%%%%%%%%%%%%%%%%%%%%%%%%%%%%%%%%%%%%%%%%%%%%%%%%%%%%%%%%%%%%
\section{Guarantees for IACV in the Proximal GD Setting} \label{app: prox-GD-theory}
%%%%%%%%%%%%%%%%%%%%%%%%%%%%%%%%%%%%%%%%%%%%%%%%%%%%%%%%%%%%%%%%%%%%%%%%%%%%%%%
We now provide the analogues of Theorems \ref{th: GD-case-approx-error} and \ref{th: model-assessment-bound} in the ProxGD setting, working with the objective function $F(\calZ;\theta) =g(\calZ;\theta)+h(\theta)$, where as before, $g(\calZ;\theta)=\sum_{i=1}^n\ell(Z_i;\theta)$ is the empirical risk and $h(\theta)=\lambda\pi(\theta)$ is the nonsmooth regularization term.

\begin{theorem}%[Approximation Error of IACV for ProxGD] 
\label{th: Prox-GD-case-approx-error}
	  Suppose $\htheta^{(0)} = \htheta^{(0)}_{-i} = \ttheta^{(0)}_{-i}$ for all $i \in [n]$, $\alpha_t \leq 1/(n \lambda_1)$ for $t \geq 1$, and Assumptions \ref{asm: hessian-condition}--\ref{asm: third-derivative-control} are satisfied with $F(\cdot; \theta)$ replaced by $g(\cdot; \theta)$ and with $\gamma < 2 \lambda_0$ and $n \geq \frac{4 \|\eta\|_\infty }{2 \lambda_0 - \gamma} $. Assume also that $h$ is convex. Then for all $ t \geq 1$ and $i \in [n]$, we have
  \[\| \htheta^{(t)} - \htheta^{(t)}_{-i} \|_2 \leq \frac{2 \eta_i}{(2 \lambda_0 - \gamma) n}\] and\[\|\ttheta_{-i}^{(t)} - \htheta^{(t)}_{-i}\|_2 \leq \frac{4 \gamma \eta_i^2 }{\lambda_0 (2 \lambda_0 - \gamma)^2 n^2 }.\]
\end{theorem}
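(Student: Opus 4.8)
The plan is to mirror the proof of Theorem~\ref{th: GD-case-approx-error}, but to replace each gradient step by the composition of a gradient step on $g$ and a proximal step on $h$, and then to exploit nonexpansiveness of the proximal operator to push the bounds through. Write $\prox_{\alpha_t h}(\theta') = \arg\!\min_\theta\{\frac{1}{2\alpha_t}\|\theta-\theta'\|_2^2 + h(\theta)\}$; since $h$ is convex, $\prox_{\alpha_t h}$ is $1$-Lipschitz (firmly nonexpansive). Thus for any two pre-prox points $\theta'$ and $\theta''$ we have $\|\prox_{\alpha_t h}(\theta') - \prox_{\alpha_t h}(\theta'')\|_2 \le \|\theta'-\theta''\|_2$. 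This is the single structural fact that lets the GD argument carry over verbatim: the gradient-step maps in the GD proof get replaced by prox-composed-with-gradient-step maps, and all the contraction/recursion inequalities are preserved because the prox only contracts distances.

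Concretely, I would proceed in the same two stages as Theorem~\ref{th: GD-case-approx-error}. \emph{Stage 1 (baseline bound on $\|\htheta^{(t)} - \htheta^{(t)}_{-i}\|_2$).} Both $\htheta^{(t)}$ and $\htheta^{(t)}_{-i}$ are obtained by applying $\prox_{\alpha_t h}$ to gradient steps on $g(\calZ;\cdot)$ and $g(\calZ_{-i};\cdot)$ respectively. Writing the difference of the pre-prox points and using $\nabla_\theta g(\calZ;\theta) - \nabla_\theta g(\calZ_{-i};\theta) = \nabla_\theta \ell(Z_i;\theta)$, a Taylor expansion of $\nabla_\theta g(\calZ_{-i};\cdot)$ together with Assumptions~\ref{asm: hessian-condition}--\ref{asm: third-derivative-control} (now stated for $g$) gives, after applying the $1$-Lipschitz prox bound, a recursion of the form $\|\htheta^{(t)} - \htheta^{(t)}_{-i}\|_2 \le (1 - \alpha_t n\lambda_0 + \tfrac{\alpha_t n\gamma}{2}\|\htheta^{(t-1)} - \htheta^{(t-1)}_{-i}\|_2)\|\htheta^{(t-1)} - \htheta^{(t-1)}_{-i}\|_2 + \alpha_t \eta_i$ — the same recursion as in the GD proof — and an induction argument under $\gamma < 2\lambda_0$ and $n \ge 4\|\eta\|_\infty/(2\lambda_0-\gamma)$ yields the stated bound $\frac{2\eta_i}{(2\lambda_0-\gamma)n}$. \emph{Stage 2 (IACV error $\|\ttheta^{(t)}_{-i} - \htheta^{(t)}_{-i}\|_2$).} Here I subtract the pre-prox point defining $\ttheta^{(t)}_{-i}$ in \eqref{eqn: iter-IACV-ProxGD} from that defining $\htheta^{(t)}_{-i}$ in \eqref{eqn: iter-general-minus-i}; the difference is exactly the Taylor-remainder term of $\nabla_\theta g(\calZ_{-i};\cdot)$ at $\htheta^{(t-1)}$ evaluated at $\htheta^{(t-1)}_{-i}$, plus a linear term in $\ttheta^{(t-1)}_{-i} - \htheta^{(t-1)}_{-i}$. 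Bounding the remainder via Assumption~\ref{asm: third-derivative-control} by $\tfrac{n\gamma}{2}\|\htheta^{(t-1)}_{-i} - \htheta^{(t-1)}\|_2^2$, using the Stage-1 bound to control $\|\htheta^{(t-1)}_{-i}-\htheta^{(t-1)}\|_2^2 \le \frac{4\eta_i^2}{(2\lambda_0-\gamma)^2 n^2}$, controlling the linear term's operator via Assumption~\ref{asm: hessian-condition} (contraction factor $1-\alpha_t n\lambda_0$), and finally applying the $1$-Lipschitz prox bound, I get a recursion $e^{(t)} \le (1-\alpha_t n\lambda_0) e^{(t-1)} + \alpha_t \cdot \frac{2\gamma\eta_i^2}{(2\lambda_0-\gamma)^2 n}$ with $e^{(0)}=0$, which telescopes to $e^{(t)} \le \frac{2\gamma\eta_i^2}{\lambda_0(2\lambda_0-\gamma)^2 n^2}$; a factor-of-two slack absorbs the cross terms and gives the stated $\frac{4\gamma\eta_i^2}{\lambda_0(2\lambda_0-\gamma)^2 n^2}$.

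The main obstacle I anticipate is making the prox-nonexpansiveness step fully rigorous when the recursions are \emph{nonlinear} (the contraction coefficient itself depends on $\|\htheta^{(t-1)}-\htheta^{(t-1)}_{-i}\|_2$). In the GD proof one can take differences of the update maps directly and bound each piece; with the prox in front, one must first bound the distance between the two \emph{pre-prox} points and only then apply the $1$-Lipschitz property — so I need to be careful that every term I want to control (the Taylor remainder, the Hessian-contraction factor, the gradient bound $\eta_i$) is expressed at the previous iterates $\htheta^{(t-1)}$, $\htheta^{(t-1)}_{-i}$, $\ttheta^{(t-1)}_{-i}$, exactly as it appears inside the gradient steps, before invoking nonexpansiveness. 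A secondary technical point is that Assumptions~\ref{asm: hessian-condition}--\ref{asm: third-derivative-control} are invoked for $g$ (not $F$), so $h$'s possible nonsmoothness never enters the Hessian bounds; convexity of $h$ is used \emph{only} to guarantee the prox is well-defined and $1$-Lipschitz. Once those bookkeeping issues are handled, the induction is identical to Theorem~\ref{th: GD-case-approx-error} and I would simply cite that proof's inductive structure rather than reproduce it.
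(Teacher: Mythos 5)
Your proposal is correct and follows essentially the same route as the paper: both reduce the ProxGD case to the GD argument by first bounding the distance between the pre-prox points and then invoking the nonexpansiveness of the proximal operator for convex $h$, after which the two-stage induction of Theorem~\ref{th: GD-case-approx-error} (with $g$ in place of $F$) carries over unchanged. The only differences are cosmetic constants in the Taylor-remainder bookkeeping (your $\gamma/2$ factor versus the paper's direct use of Assumption~\ref{asm: third-derivative-control}), which do not affect the stated bounds.
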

\noindent Note that these bounds are identical to the results obtained in Theorem~\ref{th: GD-case-approx-error}, and thus as before, we obtain the approximation error bound
\[\textnormal{Err}_{\textnormal{approx}}^{(t)}\leq 
\frac{4 \gamma \|\eta\|^2_\infty}{\lambda_0 (2 \lambda_0 - \gamma)^2 n^2 }\textnormal{ for all $t\geq 1$.}\]
The proof of Theorem~\ref{th: Prox-GD-case-approx-error} is provided in Appendix \ref{proof-sec: approximation-property}.

Next we bound the CV error for IACV in the ProxGD setting.
\begin{theorem}%[CV Error of IACV for ProxGD]
\label{th: model-assessment-bound-proxgd}
  Suppose the assumptions in Theorem \ref{th: Prox-GD-case-approx-error} and Assumption \ref{asm: GD-CV-error} are satisfied. Then for all $t \geq 1$,
  \[\textnormal{Err}_{\textnormal{CV}}^{(t)} \leq \frac{1}{n} \sum_{i=1}^n \frac{4 \gamma \eta'_i \eta_i^2}{\lambda_0 (2 \lambda_0 - \gamma)^2 n^2 }.\]
\end{theorem}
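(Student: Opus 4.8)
The plan is to mirror exactly the argument used for Theorem~\ref{th: model-assessment-bound} in the GD setting, since the CV error only depends on the approximation error bound for the leave-one-out iterates, and Theorem~\ref{th: Prox-GD-case-approx-error} provides precisely the same bound $\|\ttheta_{-i}^{(t)} - \htheta^{(t)}_{-i}\|_2 \leq \frac{4 \gamma \eta_i^2 }{\lambda_0 (2 \lambda_0 - \gamma)^2 n^2 }$ as in the smooth case. First I would write out the difference of CV losses as
\[
\CV(\{\ttheta^{(t)}_{-i}\}_{i=1}^n) - \CV(\{\htheta^{(t)}_{-i}\}_{i=1}^n) = \frac{1}{n}\sum_{i=1}^n \big( \ell(Z_i; \ttheta^{(t)}_{-i}) - \ell(Z_i; \htheta^{(t)}_{-i}) \big),
\]
and then bound each summand. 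By the triangle inequality, $\textnormal{Err}_{\textnormal{CV}}^{(t)} \leq \frac{1}{n}\sum_{i=1}^n |\ell(Z_i; \ttheta^{(t)}_{-i}) - \ell(Z_i; \htheta^{(t)}_{-i})|$.

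Next I would control each term $|\ell(Z_i; \ttheta^{(t)}_{-i}) - \ell(Z_i; \htheta^{(t)}_{-i})|$ by a mean-value / fundamental theorem of calculus argument along the segment joining $\htheta^{(t)}_{-i}$ and $\ttheta^{(t)}_{-i}$: writing $\theta_a = a\ttheta^{(t)}_{-i} + (1-a)\htheta^{(t)}_{-i}$, we have $\ell(Z_i; \ttheta^{(t)}_{-i}) - \ell(Z_i; \htheta^{(t)}_{-i}) = \int_0^1 \nabla\ell(Z_i; \theta_a)^\top(\ttheta^{(t)}_{-i} - \htheta^{(t)}_{-i})\,\mathrm{d}a$. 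Applying Cauchy--Schwarz inside the integral, then taking the supremum over $a\in[0,1]$ and invoking Assumption~\ref{asm: GD-CV-error}, gives $|\ell(Z_i; \ttheta^{(t)}_{-i}) - \ell(Z_i; \htheta^{(t)}_{-i})| \leq \eta'_i \|\ttheta^{(t)}_{-i} - \htheta^{(t)}_{-i}\|_2$. Plugging in the approximation error bound from Theorem~\ref{th: Prox-GD-case-approx-error} yields $|\ell(Z_i; \ttheta^{(t)}_{-i}) - \ell(Z_i; \htheta^{(t)}_{-i})| \leq \frac{4 \gamma \eta'_i \eta_i^2}{\lambda_0 (2 \lambda_0 - \gamma)^2 n^2 }$, and averaging over $i$ completes the proof.

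There is essentially no genuine obstacle here, since the loss $\ell$ in the ProxGD setting is the same smooth empirical loss $\sum_i \ell(Z_i;\theta)$ as in the GD setting (only the regularizer $h=\lambda\pi$ differs, and $h$ does not enter the CV loss at all). The one point requiring a moment of care is that the approximation iterate $\ttheta^{(t)}_{-i}$ and the exact leave-one-out iterate $\htheta^{(t)}_{-i}$ in the ProxGD setting are the outputs of a proximal step, but this is irrelevant: the bound we need is a statement purely about the distance $\|\ttheta^{(t)}_{-i} - \htheta^{(t)}_{-i}\|_2$ between the two points, which is already supplied by Theorem~\ref{th: Prox-GD-case-approx-error}, and about the behavior of $\ell$ on the segment between them, which is supplied by Assumption~\ref{asm: GD-CV-error}. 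Hence the proof is a verbatim repetition of the proof of Theorem~\ref{th: model-assessment-bound}, and I would simply note this rather than reproduce it in full.
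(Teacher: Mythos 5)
Your proposal is correct and matches the paper's own treatment: the paper explicitly omits this proof, noting it is essentially identical to that of Theorem~\ref{th: model-assessment-bound}, which is exactly the argument you reproduce (mean-value expansion of $\ell(Z_i;\cdot)$ along the segment, Cauchy--Schwarz, Assumption~\ref{asm: GD-CV-error}, then the approximation bound from Theorem~\ref{th: Prox-GD-case-approx-error}). Your observation that the nonsmooth regularizer $h$ never enters the CV loss, so the proximal step is irrelevant here, is precisely the point that justifies the verbatim reuse.
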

\noindent Again, this bound is the same as the one obtained in Theorem~\ref{th: model-assessment-bound} for GD.
The proof of Theorem \ref{th: model-assessment-bound-proxgd} is essentially the same as the proof of Theorem \ref{th: model-assessment-bound} and for simplicity, we omit it here.

Next, we provide convergence theory for IACV in the ProxGD setting, to obtain a result analogous to Theorem~\ref{th: limiting-behavior-estimator-GD} for the GD case comparing the IACV and NS estimators.
First, we need to define the NS estimator in this setting where $F(\calZ;\theta)$ is nonsmooth. The main idea is to replace the one Newton step in \eqref{eqn:NS-intro} with one proximal Newton step \citep{wilson2020approximate}:
\begin{equation}  \label{eq: one-step-prox-Newton-estimator}
\begin{split}
	\ttheta^{\NS}_{-i} &= \prox_{h}^{\nabla^2_\theta g( \calZ_{-i}; \htheta)} \left( \htheta - \left(\nabla^2_{\theta} g( \calZ_{-i}; \htheta) \right)^{-1} \nabla_{\theta} g (\calZ_{-i}; \htheta ) \right),
\end{split}
\end{equation} where given any positive definite matrix $H \in \bbR^{p \times  p}$ and convex function $h: \bbR^p \to \bbR$, $\prox_{h}^{H} (x)$ is defined as follows: 
\begin{equation} \label{eq: prox-bar-definition}
	\prox_{h}^{H} (x) = \arg \min_{z \in \bbR^p} \left\{\frac{1}{2}  (x - z)^\top H (x - z) + h(z)\right\}. 
\end{equation}
(We can similarly define the IJ estimator in this setting as
\begin{equation}\label{eq: IJ-prox}\ttheta^{\IJ}_{-i} = \prox_{h}^{\nabla^2_\theta g( \calZ_{-i}; \htheta)} \left( \htheta - \left(\nabla^2_{\theta} g( \calZ; \htheta) \right)^{-1} \nabla_{\theta} g (\calZ_{-i}; \htheta ) \right).\end{equation}
 In our simulations in Section~\ref{sec: simulation-ProxGD}, we simply replace $\htheta$ with $\htheta^{(t)}$ in equations~\eqref{eq: one-step-prox-Newton-estimator} and~\eqref{eq: IJ-prox} to obtain our approximate NS and IJ iterations.)

Next, we show $\ttheta^{(t)}_{-i}$ in \eqref{eqn: iter-IACV-ProxGD} will converge to $\ttheta^{\NS}_{-i}$ in \eqref{eq: one-step-prox-Newton-estimator}, under proper assumptions.
\begin{theorem}%[Limiting Behavior of IACV for ProxGD] 
\label{th: limiting-behavior-estimator-prox-GD} 
Suppose $\htheta^{(t)}$ converges to $\htheta$, Assumption \ref{asm: hessian-condition} is satisfied along all iterations with $g(\cdot; \theta)$ in place of $F(\cdot; \theta)$, $\alpha_t = \alpha < 1/(n \lambda_1) $ for all $t \geq 1$, and $\nabla^2_\theta g(\calZ_{-i}; \theta)$, $\nabla_\theta g (\calZ_{-i}; \theta)$ are continuous in $\theta$ for all $i \in [n]$. In addition, we assume $h(\theta)$ is convex. Then it holds that $\ttheta^{(t)}_{-i}$ in \eqref{eqn: iter-IACV-ProxGD} converges to $\ttheta^{\NS}_{-i}$ in \eqref{eq: one-step-prox-Newton-estimator}, as $t\rightarrow\infty$.
\end{theorem}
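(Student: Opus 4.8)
The plan is to read the IACV recursion~\eqref{eqn: iter-IACV-ProxGD} with constant step size $\alpha_t\equiv\alpha$ as an asymptotically autonomous fixed-point iteration and to show it converges to the unique fixed point of its limiting map, which I will identify with $\ttheta^{\NS}_{-i}$. Fix $i\in[n]$, write $u^{(t)}=\ttheta^{(t)}_{-i}$, and abbreviate $G^{(t)}=\nabla_\theta g(\calZ_{-i};\htheta^{(t)})$, $H^{(t)}=\nabla^2_\theta g(\calZ_{-i};\htheta^{(t)})$, $G=\nabla_\theta g(\calZ_{-i};\htheta)$, $H=\nabla^2_\theta g(\calZ_{-i};\htheta)$, and $\prox_{\alpha h}(x)=\arg\!\min_z\{\tfrac1{2\alpha}\|x-z\|_2^2+h(z)\}$. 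Then~\eqref{eqn: iter-IACV-ProxGD} reads $u^{(t)}=\Phi_t(u^{(t-1)})$ with $\Phi_t(u)=\prox_{\alpha h}\!\big(u-\alpha(G^{(t-1)}+H^{(t-1)}[u-\htheta^{(t-1)}])\big)$, and the natural limiting map is $\Phi(u)=\prox_{\alpha h}\!\big(u-\alpha(G+H[u-\htheta])\big)$. First I would record that, by Assumption~\ref{asm: hessian-condition} applied along the iterates $\htheta^{(t-1)}\to\htheta$ together with continuity of $\nabla^2_\theta g(\calZ_{-i};\cdot)$ (eigenvalue bounds pass to the limit), the limit satisfies $n\lambda_0 I\preceq H\preceq n\lambda_1 I$; in particular $H$ is invertible and $\ttheta^{\NS}_{-i}$ in~\eqref{eq: one-step-prox-Newton-estimator} is well defined.

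Next I would show that the fixed points of $\Phi$ are exactly $\ttheta^{\NS}_{-i}$. Using $z=\prox_{\alpha h}(x)\iff x-z\in\alpha\partial h(z)$, a fixed point $u^\star$ of $\Phi$ is characterized by $0\in G+H[u^\star-\htheta]+\partial h(u^\star)$; using instead the optimality condition $z=\prox_h^{H}(x)\iff H(x-z)\in\partial h(z)$ for the weighted prox~\eqref{eq: prox-bar-definition}, the same inclusion is equivalent to $u^\star=\prox_h^{H}(\htheta-H^{-1}G)=\ttheta^{\NS}_{-i}$. Uniqueness follows either from $n\lambda_0$-strong monotonicity of $u\mapsto G+H[u-\htheta]+\partial h(u)$, or simply from Banach's theorem once the contraction property below is established.

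The contraction property is the second ingredient. The affine map $u\mapsto u-\alpha(G^{(t-1)}+H^{(t-1)}[u-\htheta^{(t-1)}])$ has linear part $I-\alpha H^{(t-1)}$; since $n\lambda_0 I\preceq H^{(t-1)}\preceq n\lambda_1 I$ and $0<\alpha<1/(n\lambda_1)$, its operator norm is at most $\rho:=1-\alpha n\lambda_0\in(0,1)$, uniformly in $t$, and the same holds for the limiting affine map. Composing with $\prox_{\alpha h}$, which is nonexpansive because $h$ is convex, shows $\Phi_t$ and $\Phi$ are all $\rho$-Lipschitz. Evaluating at the \emph{fixed} vector $u^\star=\ttheta^{\NS}_{-i}$ and using nonexpansiveness of $\prox_{\alpha h}$ gives $\epsilon_t:=\|\Phi_t(u^\star)-\Phi(u^\star)\|_2\le\alpha\big(\|G^{(t-1)}-G\|_2+\|H^{(t-1)}[u^\star-\htheta^{(t-1)}]-H[u^\star-\htheta]\|_2\big)$, which tends to $0$ by continuity of $\nabla_\theta g(\calZ_{-i};\cdot)$ and $\nabla^2_\theta g(\calZ_{-i};\cdot)$ and $\htheta^{(t)}\to\htheta$. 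Finally, $\|u^{(t)}-u^\star\|_2\le\|\Phi_t(u^{(t-1)})-\Phi_t(u^\star)\|_2+\|\Phi_t(u^\star)-\Phi(u^\star)\|_2\le\rho\|u^{(t-1)}-u^\star\|_2+\epsilon_t$, and the elementary fact that $a_t\le\rho a_{t-1}+\epsilon_t$ with $\rho\in(0,1)$ and $\epsilon_t\to0$ forces $a_t\to0$ (take $\limsup$ of both sides) yields $\ttheta^{(t)}_{-i}=u^{(t)}\to\ttheta^{\NS}_{-i}$.

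I expect the only genuinely delicate point to be the identification in the second step: matching the fixed-point relation of a first-order proximal recursion (governed by the ordinary prox $\prox_{\alpha h}$) with the proximal-Newton step defined through the Hessian-weighted prox $\prox_h^{H}$ in~\eqref{eq: prox-bar-definition}, which requires careful bookkeeping of the two subdifferential optimality conditions, plus checking that $H$ inherits the conditioning of Assumption~\ref{asm: hessian-condition} in the limit so that $\ttheta^{\NS}_{-i}$ is well defined. Everything else---the uniform contraction constant $\rho=1-\alpha n\lambda_0$ and the perturbed-contraction convergence---is routine, and the argument is structurally identical to the GD case (Theorem~\ref{th: limiting-behavior-estimator-GD}) with $\prox_{\alpha h}$ replacing the identity map.
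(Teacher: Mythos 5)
Your proposal is correct and follows essentially the same route as the paper's proof: identify $\ttheta^{\NS}_{-i}$ as the fixed point of the limiting proximal recursion by matching the subdifferential optimality conditions of $\prox_{\alpha h}$ and the Hessian-weighted prox, then use nonexpansiveness of the prox together with the uniform contraction factor $1-\alpha n\lambda_0$ and the continuity-induced vanishing perturbation to conclude. The only difference is cosmetic bookkeeping (you evaluate the perturbation at the fixed point before applying the triangle inequality, whereas the paper expands the difference directly), so there is nothing substantive to add.
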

The proof of this theorem is provided in Appendix \ref{proof-sec: limiting-behavior}.

\section{Details for deriving computational complexity}\label{app:computational-complexity}
To derive the calculations in Table~\ref{tab: time-memory-complexity}, for example, for GD,
the main per-iteration cost of our method comes from computing $\{\nabla_\theta g(\calZ_{-i}; \htheta^{(t-1)}) \}_{i=1}^n$ and $\{\nabla^2_\theta g(\calZ_{-i}; \htheta^{(t-1)})\}_{i=1}^n$, and performing the Hessian-gradient product. As we assume above, due to the relation $\nabla_\theta g(\calZ_{-i}; \htheta^{(t-1)}) =   \sum_{j \in [n],j \neq i} \nabla_\theta \ell(Z_j; \htheta^{(t-1)})$, the cost of computing $\{\nabla_\theta g(\calZ_{-i}; \htheta^{(t-1)}) \}_{i=1}^n$ and $\{\nabla^2_\theta g(\calZ_{-i}; \htheta^{(t-1)})\}_{i=1}^n$ is of order $n(A_p+B_p)$. Moreover, the cost of the Hessian-gradient product is of order $np^2$. On the other hand, the main per-iteration computational cost of the exact leave-one-out CV (i.e., running an iteration of gradient descent on the dataset $\calZ_{-i}$, for each $i$), comes from evaluating $\nabla_\theta g(\calZ_{-i}, \htheta^{(t-1)}_{-i} )$ and then subtracting this gradient from the current estimate, for each $i$. These two steps have cost of order $nA_p$ and $p$, respectively, for each $i$, and therefore the total cost of one iteration has order $n^2A_p+np$. Similar calculations can be performed to compute the order of computational complexity for SGD and ProxGD as well.

Note here that we do not intend to compare the runtime of NS and IJ with IACV and exact CV because these methods are not comparable in terms of their target problem. NS and IJ are one-step methods, where given a single solution (i.e., a single $\widehat\theta^{(T)}$ approximating $\widehat\theta$), we run the method once to approximate the leave-one-out models. On the other hand, exact iterative CV as well as our IACV methods are both performed in an online fashion, with steps carried out for each $t=1,\dots,T$. The cost of NS and IJ will not scale with $T$, while naturally cost of CV and IACV must scale with $T$; this apparent computational benefit of NS and IJ is simply due to the fact that NS and IJ ignore the iterative nature of the algorithm and thus are completely invalid (i.e., errors are higher than the noninformative ``baseline'') for early, pre-convergence times $T$.

\section{A Larger Scale Simulation for IACV in GD} \label{app: large-scale-gd}
 In Figures \ref{fig: gd_large} and \ref{fig: gd_large_runtime}, we provide simulation results for comparing IACV and other methods when $n = 5000$, $p = 50$. We observe a similar pattern as in the existing plots: the approximation error of IACV is always better than NS and IJ before the convergence of the algorithm and IACV is much faster than the exact leave-one-out CV. If we further grow $p$ and $n$, we find it is too expensive to run the whole program as the exact leave-one-out CV takes too much time to run for many independent trials, but we nonetheless would expect to see similar performance.

\begin{figure}[t]
	\centering
	\includegraphics[width = 0.82\textwidth]{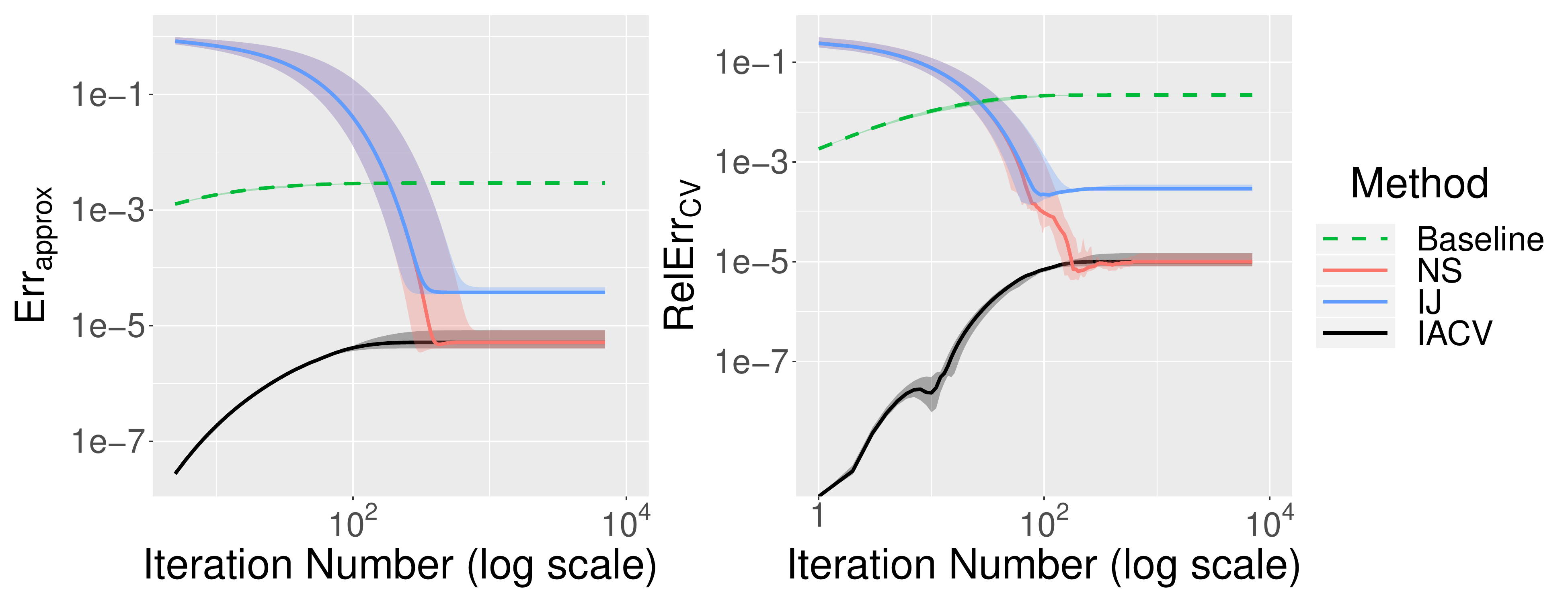} 
	\caption{Approximation error and relative CV error comparison of the baseline, NS, IJ, and IACV methods in gradient descent for logistic regression with ridge regularizer. Solid lines represent the median value over $100$ experiments and shaded areas denote the region between lower and upper quartiles.} \label{fig: gd_large}
\end{figure}

\begin{figure}[t]
	\centering
	\includegraphics[width = 0.5\textwidth]{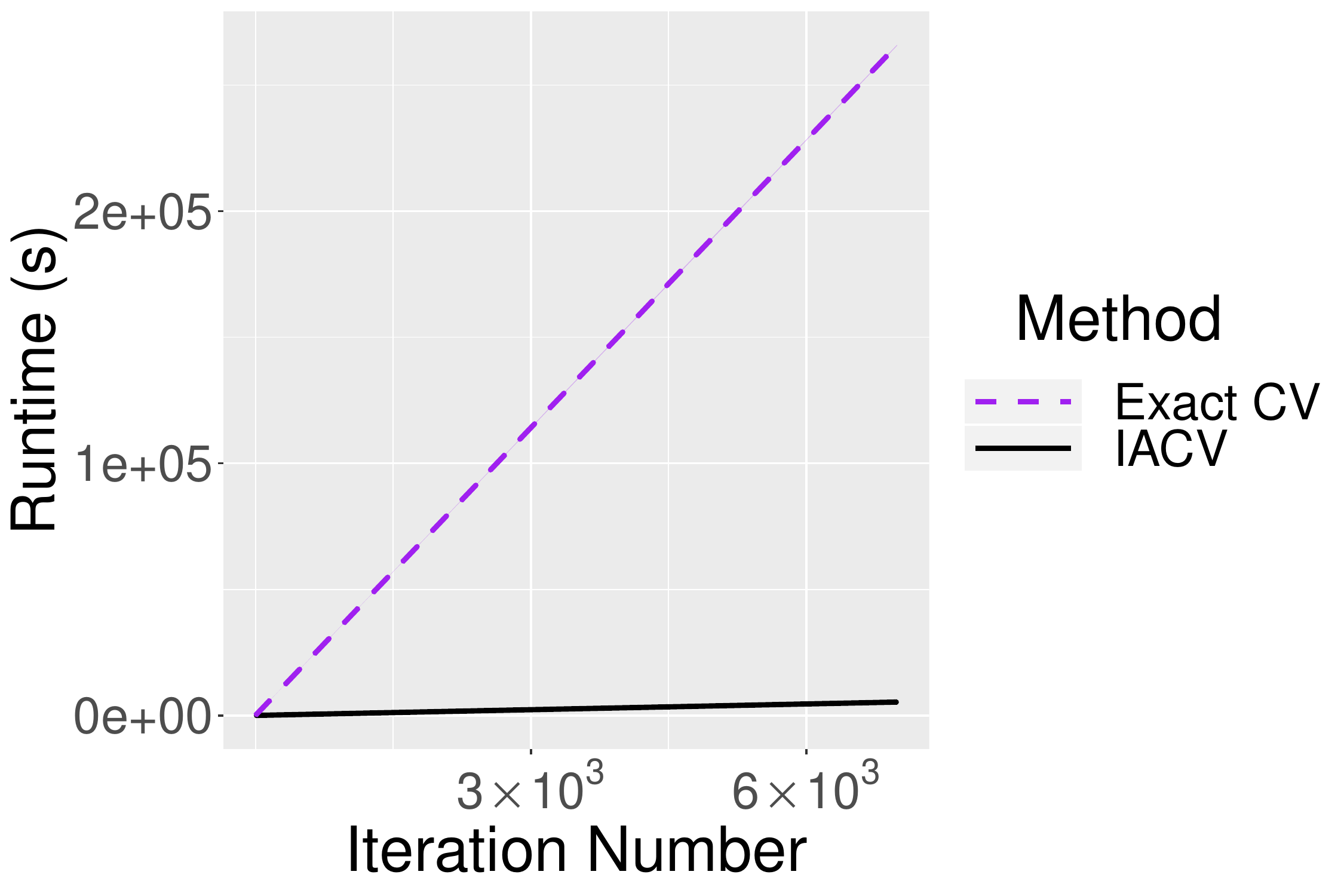} 
	\caption{Runtime comparison of the baseline, NS, IJ, and IACV methods in gradient descent for logistic regression with ridge regularizer.} \label{fig: gd_large_runtime}
\end{figure} 

\section{A Iteration-Dependent Upper Bound for $\|\ttheta_{-i}^{(t)}-\htheta^{(t)}_{-i}\|_2$} \label{app: a-time-dependent-bound}
Notice that in Theorem \ref{th: GD-case-approx-error}, we only present a version of the upper bound which is independent of $t$ for convenience. It is simple and clearly illustrates the idea that the error bound holds along the whole trajectory of the learning process. But based on our proof, a $t$-dependent iterative bound on the approximation error can be obtained and it will typically be sharper at the early stage of training. Specifically, in our proof of Theorem \ref{th: GD-case-approx-error}, \eqref{ineq: iteration-dependent-bound2} shows
\begin{equation*}
	\|\ttheta_{-i}^{(t)}-\htheta^{(t)}_{-i} \|_2 \leq (1 - \alpha_t n \lambda_0) \|\ttheta_{-i}^{(t-1)}-\htheta^{(t-1)}_{-i} \|_2 + \alpha_t n\gamma \|\htheta^{(t-1)}_{-i}-\htheta^{(t-1)} \|_2^2, \quad \forall t \geq 1.
\end{equation*} The error bound of $\ttheta_{-i}^{(t)}$ depends on error bound of $\ttheta_{-i}^{(t-1)}$ and $\|\htheta^{(t-1)}_{-i}-\htheta^{(t-1)} \|_2$. Furthermore, $\|\htheta^{(t-1)}_{-i}-\htheta^{(t-1)} \|_2$ also satisfies the following iterative upper bound by \eqref{ineq: iteration-dependent-bound1}:
\begin{equation*}
	\|\htheta^{(t-1)}_{-i}-\htheta^{(t-1)} \|_2 \leq (1- \alpha_{t-1} n \lambda_0) \|\htheta^{(t-2)} - \htheta^{(t-2)}_{-i}\|_2 + \alpha_{t-1} \eta_i  + \alpha_{t-1} n \gamma  \|\htheta^{(t-2)} - \htheta^{(t-2)}_{-i}\|_2^2.
\end{equation*}    
If we initialize $\htheta^{(0)} = \htheta^{(0)}_{-i} = \ttheta^{(0)}_{-i} = 0$, we can see the error bound for $\|\htheta^{(t-1)}_{-i}-\htheta^{(t-1)} \|_2$ will slightly increase from $0$ over the iteration $t$ and stabilize when the error contraction ($- \alpha_{t-1} n \lambda_0 \|\htheta^{(t-2)} - \htheta^{(t-2)}_{-i}\|_2$) cancels out the per-iteration approximation error accumulation ($\alpha_{t-1} \eta_i  + \alpha_{t-1} n\gamma \|\htheta^{(t-2)} - \htheta^{(t-2)}_{-i}\|_2^2$). By a similar argument, we also have the error bound for $\|\ttheta_{-i}^{(t)}-\htheta^{(t)}_{-i}\|_2$ will increase first and then stabilize.

%%%%%%%%%%%%%%%%%%%%%%%%%%%%%%%
\section{Proofs}
%%%%%%%%%%%%%%%%%%%%%%%%%%%%%%%

%%%%%%%%%%%%%%%%%%%%%%%%%%%%%%%
\subsection{Proofs of error bound results} \label{proof-sec: approximation-property}
%%%%%%%%%%%%%%%%%%%%%%%%%%%%%%%

In this section we prove the approximation error bounds and CV error bounds for GD, SGD, and ProxGD.

\begin{proof}[Proof of Theorem \ref{th: GD-case-approx-error}]

 {\bf Step 1.} In this step, we prove the bound on $\|\htheta^{(t)}-\htheta^{(t)}_{-i}\|_2$ by induction. By the update rule of  $\htheta^{(t)}$ and $\htheta^{(t)}_{-i}$, we have
	\begin{equation} \label{eq: quantity-1-main-bound}
	\begin{split}
		\htheta^{(t)} - \htheta^{(t)}_{-i} &= \htheta^{(t-1)} - \htheta^{(t-1)}_{-i} + \alpha_t \left( \nabla_\theta F (\calZ_{-i}; \htheta^{(t-1)}_{-i} ) - \nabla_{\theta} F(\calZ; \htheta^{(t-1)}  ) \right) \\
		&  = \htheta^{(t-1)} - \htheta^{(t-1)}_{-i} - \alpha_t \nabla \ell(Z_i;\htheta^{(t-1)} ) +   \alpha_t \left( \nabla_\theta F (\calZ_{-i}; \htheta^{(t-1)}_{-i} ) - \nabla_{\theta} F(\calZ_{-i}; \htheta^{(t-1)}  ) \right)  .
	\end{split}
	\end{equation} Given any $u \in \bbR^p$, by the Taylor expansion for $g_u(\theta) = \langle u,  \nabla_\theta F (\calZ_{-i}; \theta ) \rangle$, we have 
	\begin{equation} \label{eq: gradient-taylor-expansion}
	\begin{split}
	g_u(\htheta^{(t-1)}_{-i}) &= g_u( \htheta^{(t-1)} ) + \nabla_\theta g_u (\theta') \\
	\Longleftrightarrow \langle u,  \nabla_\theta F (\calZ_{-i}; \htheta^{(t-1)}_{-i} ) \rangle &= \langle u,  \nabla_\theta F (\calZ_{-i}; \htheta^{(t-1)} ) \rangle + \langle u, \nabla^2_\theta F (\calZ_{-i}; \htheta' )[\htheta^{(t-1)}_{-i} - \htheta^{(t-1)}] \rangle \\
	& = \langle u,  \nabla_\theta F (\calZ_{-i}; \htheta^{(t-1)} ) \rangle + \langle u, \nabla^2_\theta F (\calZ_{-i}; \htheta^{(t-1)} )[\htheta^{(t-1)}_{-i} - \htheta^{(t-1)}] \rangle \\
	 & \quad + \langle u, (\nabla^2_\theta F (\calZ_{-i}; \theta' ) - \nabla^2_\theta F (\calZ_{-i}; \htheta^{(t-1)} ) ) [\htheta^{(t-1)}_{-i} - \htheta^{(t-1)}] \rangle
	\end{split}
	\end{equation} where $\theta' = a\htheta^{(t-1)} + (1-a)\htheta^{(t-1)}_{-i}$ for some $a\in[0,1]$.
	
	Plugging \eqref{eq: gradient-taylor-expansion} into \eqref{eq: quantity-1-main-bound}, we have for any $u \in \bbR^p$,
	\begin{equation} \label{eq: quantity-1-main-bound-2}
		\begin{split}
			\langle \htheta^{(t)} - \htheta^{(t)}_{-i}, u \rangle & = \langle \htheta^{(t-1)} - \htheta^{(t-1)}_{-i}  - \alpha_t \ell(Z_i;\htheta^{(t-1)} ) + \alpha_t  \nabla^2_\theta F(\calZ_{-i}; \htheta^{(t-1)}  )[\htheta^{(t-1)}_{-i}  - \htheta^{(t-1)}], u \rangle  \\
			& \quad  + \alpha_t \langle u, (\nabla^2_\theta F (\calZ_{-i}; \theta' ) - \nabla^2_\theta F (\calZ_{-i}; \htheta^{(t-1)} ) ) [\htheta^{(t-1)}_{-i} - \htheta^{(t-1)}] \rangle \\
			& = \left\langle \left(\I_p - \alpha_t   \nabla^2_\theta F(\calZ_{-i}; \htheta^{(t-1)}) \right)[ \htheta^{(t-1)} - \htheta^{(t-1)}_{-i} ] - \alpha_t \ell(Z_i;  \htheta^{(t-1)} ),u \right\rangle  \\
			& \quad + \alpha_t \langle u, (\nabla^2_\theta F (\calZ_{-i}; \theta' ) - \nabla^2_\theta F (\calZ_{-i}; \htheta^{(t-1)} ) ) [\htheta^{(t-1)}_{-i} - \htheta^{(t-1)}] \rangle.
		\end{split}
	\end{equation} 
	
	Since $\htheta^{(0)} = \htheta^{(0)}_{-i}$, the error bound of $\htheta^{(t)}$ to $\htheta^{(t)}_{-i}$ holds at $t = 0$. Suppose now $\| \htheta^{(t-1)} - \htheta^{(t-1)}_{-i} \|_2 \leq \frac{2 \eta_i}{(2 \lambda_0 - \gamma) n}$ holds for some $t \geq 1$, then by \eqref{eq: quantity-1-main-bound-2} and the fact $\| \htheta^{(t)} - \htheta^{(t)}_{-i} \|_2 = \sup_{u: \|u\|_2 = 1} \langle \htheta^{(t)} - \htheta^{(t)}_{-i},u \rangle $, we have
	\begin{equation}\label{ineq: iteration-dependent-bound1}
		\begin{split}
			\|\htheta^{(t)} - \htheta^{(t)}_{-i}\|_2 \overset{(a)}\leq & \left\| \I_p - \alpha_t   \nabla^2_\theta F(\calZ_{-i}; \htheta^{(t-1)}) \right\|_2 \|\htheta^{(t-1)} - \htheta^{(t-1)}_{-i}\|_2 + \alpha_t \eta_i + \alpha_t n \gamma  \|\htheta^{(t-1)} - \htheta^{(t-1)}_{-i}\|_2^2 \\
			\overset{(b)} \leq &  (1- \alpha_t n\lambda_0) \|\htheta^{(t-1)} - \htheta^{(t-1)}_{-i}\|_2 + \alpha_t \eta_i + \alpha_t n \gamma \|\htheta^{(t-1)} - \htheta^{(t-1)}_{-i}\|_2^2 \\
			\overset{(c)} \leq &  (1- \alpha_t n\lambda_0)  \frac{2 \eta_i}{(2 \lambda_0 - \gamma) n} + \alpha_t \eta_i +\alpha_t n \gamma \cdot\Big(\frac{2 \eta_i}{(2 \lambda_0 - \gamma) n}\Big)^2\\
			\overset{(d)} \leq & (1- \alpha_t n \lambda_0)  \frac{2 \eta_i}{(2 \lambda_0 - \gamma) n} + \alpha_t \eta_i + \alpha_t n \gamma \frac{ \eta_i}{(2\lambda_0 - \gamma) n} = \frac{2 \eta_i}{(2 \lambda_0 - \gamma) n},
		\end{split} 
	\end{equation} where (a) is by triangle inequality and Assumptions \ref{asm: gradient-control} and \ref{asm: third-derivative-control}; (b) is by Assumption \ref{asm: hessian-condition}; (c) is by the induction assumption; (d) is because $n \geq \frac{4 \|\eta\|_\infty }{2 \lambda_0 - \gamma}$. This proves the desired bound on $\| \htheta^{(t)} - \htheta^{(t)}_{-i} \|_2$.
	
	{\bf Step 2.} We now prove the bound on $\|\ttheta^{(t)}_{-i}-\htheta^{(t)}_{-i}\|_2$ by induction. First by the update rule of $ \htheta^{(t)}_{-i}$ and the Taylor expansion in \eqref{eq: gradient-taylor-expansion}, we have for any $u \in \bbR^p$,
	\begin{equation*}
		\begin{split}
			\langle \htheta^{(t)}_{-i}, u \rangle & =  \left \langle \htheta^{(t-1)}_{-i} - \alpha_t \left( \nabla_{\theta} F(\calZ_{-i}; \htheta^{(t-1)}  ) + \nabla^2_\theta F(\calZ_{-i}; \htheta^{(t-1)}  )[\htheta^{(t-1)}_{-i} -  \htheta^{(t-1)} ]   \right), u \right \rangle \\
			& \quad - \alpha_t \langle u, (\nabla^2_\theta F (\calZ_{-i}; \theta' ) - \nabla^2_\theta F (\calZ_{-i}; \htheta^{(t-1)} ) ) [\htheta^{(t-1)}_{-i} - \htheta^{(t-1)}] \rangle.
		\end{split}
	\end{equation*} Combining with \eqref{eqn: iter-IACV-GD}, we have
	\begin{equation} \label{eq: quantity-2-main-bound}
		\begin{split}
			\langle \htheta^{(t)}_{-i} - \ttheta_{-i}^{(t)}, u \rangle  &=  \left \langle \htheta^{(t-1)}_{-i} -  \ttheta_{-i}^{(t-1)} - \alpha_t  \nabla^2_\theta F(\calZ_{-i}; \htheta^{(t-1)}  ) [ \htheta^{(t-1)}_{-i} - \ttheta_{-i}^{(t-1)} ], u \right \rangle \\
			& \quad - \alpha_t \langle u, (\nabla^2_\theta F (\calZ_{-i}; \theta' ) - \nabla^2_\theta F (\calZ_{-i}; \htheta^{(t-1)} ) ) [\htheta^{(t-1)}_{-i} - \htheta^{(t-1)}] \rangle \\
			& = \left \langle \left(\I_p - \alpha_t  \nabla^2_\theta F(\calZ_{-i}; \htheta^{(t-1)})  \right) [ \htheta^{(t-1)}_{-i} - \ttheta_{-i}^{(t-1)} ], u \right \rangle \\
			& \quad - \alpha_t \langle u, (\nabla^2_\theta F (\calZ_{-i}; \theta' ) - \nabla^2_\theta F (\calZ_{-i}; \htheta^{(t-1)} ) ) [\htheta^{(t-1)}_{-i} - \htheta^{(t-1)}] \rangle.
		\end{split}
	\end{equation} 
	
	Since $\htheta^{(0)}_{-i} = \ttheta^{(0)}_{-i}$, the error bound holds at $t = 0$. Suppose now $\|\ttheta_{-i}^{(t-1)} - \htheta^{(t-1)}_{-i}\|_2 \leq \frac{4 \gamma \eta_i^2 }{\lambda_0 (2 \lambda_0 - \gamma)^2 n^2 }$ holds, then at time $t$, based on \eqref{eq: quantity-2-main-bound} and the fact that $\| \htheta^{(t)}_{-i} - \ttheta_{-i}^{(t)} \|_2 = \sup_{u:\|u\|_2=1}\langle \htheta^{(t)}_{-i} - \ttheta_{-i}^{(t)}, u\rangle$, we have
	\begin{equation}\label{ineq: iteration-dependent-bound2}
		\begin{split}
			\|\htheta^{(t)}_{-i} - \ttheta_{-i}^{(t)}\|_2 \overset{(a)}\leq & \left\| \I_p - \alpha_t  \nabla^2_\theta F(\calZ_{-i}; \htheta^{(t-1)}) \right\|_2 \|\htheta^{(t-1)}_{-i} - \ttheta_{-i}^{(t-1)} \|_2 + \alpha_t n \gamma \|\htheta^{(t-1)}_{-i}-\htheta^{(t-1)} \|_2^2 \\
			 \overset{(b)} \leq & (1 - \alpha_t n \lambda_0) \|\htheta^{(t-1)}_{-i} - \ttheta_{-i}^{(t-1)} \|_2 + \alpha_t n \gamma\|\htheta^{(t-1)}_{-i}-\htheta^{(t-1)} \|_2^2 \\
			 \overset{(c)} \leq & (1 - \alpha_t n \lambda_0) \frac{4 \gamma \eta_i^2 }{\lambda_0 (2 \lambda_0 - \gamma)^2 n^2 } + \alpha_t n \gamma \left( \frac{2 \eta_i}{(2 \lambda_0 - \gamma) n}  \right)^2 \\
			 = & \frac{4 \gamma \eta_i^2 }{\lambda_0 (2 \lambda_0 - \gamma)^2 n^2 }.
		\end{split}
	\end{equation} Here (a) is by the triangle inequality and Assumption \ref{asm: third-derivative-control}; (b) is by Assumption \ref{asm: hessian-condition}; (c) is by the induction assumption and the error bound we have proved for $\|\htheta^{(t-1)}_{-i}-\htheta^{(t-1)} \|_2$ in Step 1.
	This proves the desired bound on $\| \ttheta^{(t)}_{-i} - \htheta^{(t)}_{-i} \|_2$, and thus completes the proof of this theorem.
\end{proof}

\begin{proof}[Proof of Theorem \ref{th: model-assessment-bound}]
	First, for every $i \in [n]$, by the Taylor expansion, we have
	\begin{equation*}
		\begin{split}
			\ell(Z_i; \ttheta^{(t)}_{-i} ) = \ell(Z_i; \htheta^{(t)}_{-i}) +  \nabla_\theta \ell(Z_i;  \theta' )^\top [\ttheta^{(t)}_{-i}  -   \htheta^{(t)}_{-i} ] 
		\end{split}
	\end{equation*} where $\theta' = a \ttheta^{(t)}_{-i} + (1-a) \htheta^{(t)}_{-i}$ for some $0 \leq a \leq 1$. 
 	Thus
	\begin{equation} \label{eq: individual-error-bound}
		\begin{split}
			|\ell(Z_i; \ttheta^{(t)}_{-i}  ) - \ell(Z_i; \htheta^{(t)}_{-i})| \leq \|\nabla_\theta \ell(Z_i; \theta' ) \|_2\cdot \|\ttheta^{(t)}_{-i}  -   \htheta^{(t)}_{-i} \|_2 \leq \eta'_i   \|\ttheta^{(t)}_{-i}  -   \htheta^{(t)}_{-i} \|_2. 
		\end{split}
	\end{equation} Here the last inequality is by Assumption~\ref{asm: GD-CV-error}.
	This completes the proof of this theorem, once we plug in the bound on $\|\ttheta^{(t)}_{-i} - \htheta^{(t)}_{-i} \|_2$ obtained in Theorem \ref{th: GD-case-approx-error}.
\end{proof}

\begin{proof}[Proof of Theorem \ref{th: SGD-case-approx-error}] 
{\bf Step 1.} In this step, we prove the bound on $\|\htheta^{(t)} -\htheta^{(t)}_{-i}\|_2$ by induction. Similar to \eqref{eq: gradient-taylor-expansion}, given any $u \in \bbR^p$, by the Taylor expansion for $ \langle \nabla_{\theta} F(\calZ_{S_t\setminus \{ i\} }; \htheta^{(t-1)}_{-i}  ), u \rangle$, we have
\begin{equation} \label{eq: stochastic-gradient-taylor-expansion}
	\begin{split}
		\langle \nabla_{\theta} F(\calZ_{S_t\setminus \{ i\} }; \htheta^{(t-1)}_{-i}  ), u \rangle & = \left\langle \nabla_{\theta} F(\calZ_{S_t\setminus \{ i\} }; \htheta^{(t-1)}  ) + \nabla^2_\theta F(\calZ_{S_t\setminus \{ i\} }; \htheta^{(t-1)}  )[\htheta^{(t-1)}_{-i} -  \htheta^{(t-1)} ], u \right\rangle  \\
		& \quad  + \langle (\nabla^2_\theta F (\calZ_{S_t\setminus \{ i\}}; \theta' ) - \nabla^2_\theta F (\calZ_{S_t\setminus \{ i\}}; \htheta^{(t-1)} ) ) [\htheta^{(t-1)}_{-i} - \htheta^{(t-1)}], u \rangle,
	\end{split}
	\end{equation} where $\theta' = a \htheta^{(t-1)} + (1-a) \htheta^{(t-1)}_{-i}$ with $0 \leq a \leq 1$.
	
	So we have
	\begin{equation*}
		\begin{split}
			\langle \htheta^{(t)} - \htheta^{(t)}_{-i},u \rangle  & = \langle \htheta^{(t-1)} - \htheta^{(t-1)}_{-i} - \alpha_t\nabla_\theta F (\calZ_{S_t}; \htheta^{(t-1)} ) + \alpha_t\nabla_\theta F (\calZ_{S_t\setminus \{i \} }; \htheta^{(t-1)}_{-i} ), u \rangle  \\
			& \overset{ \eqref{eq: stochastic-gradient-taylor-expansion} } = \langle [\I_p - \alpha_t \nabla^2_\theta F (\calZ_{S_t\setminus \{i \} }; \htheta^{(t-1)} )  ][\htheta^{(t-1)} - \htheta^{(t-1)}_{-i}], u \rangle   \\
			& \quad- \langle \alpha_t\nabla_\theta F (\calZ_{S_t}; \htheta^{(t-1)} )- \alpha_t\nabla_\theta F (\calZ_{S_t\setminus \{i \} }; \htheta^{(t-1)} ), u \rangle  \\
			& \quad + \alpha_t  \langle (\nabla^2_\theta F (\calZ_{S_t\setminus \{ i\}}; \theta' ) - \nabla^2_\theta F (\calZ_{S_t\setminus \{ i\}}; \htheta^{(t-1)} ) ) [\htheta^{(t-1)}_{-i} - \htheta^{(t-1)}], u \rangle.
		\end{split}
	\end{equation*}
	
	Then we have
	\begin{equation} \label{ineq: sgd-main-bound}
		\begin{split}
			\bbE\|\htheta^{(t)} - \htheta^{(t)}_{-i} \|_2 & \leq \underbrace{\bbE\left( \|  [\I_p - \alpha_t \nabla^2_\theta F (\calZ_{S_t\setminus \{i \} }; \htheta^{(t-1)} )  ][\htheta^{(t-1)} - \htheta^{(t-1)}_{-i}]\|_2 \right)}_{\textnormal{(A)}} \\
			& \quad +\alpha_t  \underbrace{\bbE \| \nabla_\theta F (\calZ_{S_t}; \htheta^{(t-1)} ) - \nabla_\theta F (\calZ_{S_t\setminus \{i \} }; \htheta^{(t-1)} ) \|_2}_{\textnormal{(B)}} \\
			& \quad + \alpha_t  \underbrace{\bbE \|  (\nabla^2_\theta F (\calZ_{S_t\setminus \{ i\}}; \theta' ) - \nabla^2_\theta F (\calZ_{S_t\setminus \{ i\}}; \htheta^{(t-1)} ) ) [\htheta^{(t-1)}_{-i} - \htheta^{(t-1)}] \|_2}_{\textnormal{(C)}}.
		\end{split}
	\end{equation}
	
	Next, we bound the (A), (B), (C) terms separately. First for term (A),
	\begin{equation} \label{ineq: A-bound}
		\begin{split}
			& \bbE\left( \|   [\I_p - \alpha_t \nabla^2_\theta F (\calZ_{S_t\setminus \{i \} }; \htheta^{(t-1)} )  ][\htheta^{(t-1)} - \htheta^{(t-1)}_{-i}] \|_2  \right) \\
		= & \bbE_{S_1,\ldots, S_{t-1}}\left[ \bbE_{S_t} (\|  [\I_p - \alpha_t \nabla^2_\theta F (\calZ_{S_t\setminus \{i \} }; \htheta^{(t-1)} )  ][\htheta^{(t-1)} - \htheta^{(t-1)}_{-i}]  \|_2 )  \right] \\
			\overset{\textnormal{Assumption } \eqref{asm: sgd-hessian-condition} }\leq & (1 - \alpha_t K \lambda_0) \bbE_{S_1,\ldots, S_{t-1}} \left[ \|\htheta^{(t-1)} - \htheta^{(t-1)}_{-i} \|_2 \right].
		\end{split}
	\end{equation} Next, for term (B),
	\begin{equation} \label{ineq: B-bound}
		\begin{split}
			& \bbE \| \nabla_\theta F (\calZ_{S_t}; \htheta^{(t-1)} ) - \nabla_\theta F (\calZ_{S_t\setminus \{i \} }; \htheta^{(t-1)} ) \|_2 \\
			= &  \bbE_{S_1,\ldots, S_{t-1}}\left[ \bbE_{S_t}   \| \nabla_\theta F (\calZ_{S_t}; \htheta^{(t-1)} ) - \nabla_\theta F (\calZ_{S_t\setminus \{i \} }; \htheta^{(t-1)} ) \|_2  \right] \\
			= & \frac{K}{n} \bbE_{S_1,\ldots, S_{t-1}} \| \nabla_\theta\ell(Z_i; \htheta^{(t-1)} ) \|_2 \overset{ \textnormal{Assumption } \eqref{asm: sgd-gradient-control} }\leq \frac{ K \eta_i}{n}.
		\end{split}
	\end{equation} Here the next-to-last step holds because each data point is included independently in $S_t$ with probability $K/n$; thus with probability $1 - K/n$,  we have $\nabla_\theta F (\calZ_{S_t}; \htheta^{(t-1)} ) - \nabla_\theta F (\calZ_{S_t\setminus \{i \} }; \htheta^{(t-1)} ) = 0 $ and with probability $K/n$, $i \in S_t$ and in that case we have $\nabla_\theta F (\calZ_{S_t}; \htheta^{(t-1)} ) - \nabla_\theta F (\calZ_{S_t\setminus \{i \} }; \htheta^{(t-1)} ) =  \nabla_\theta \ell(Z_i; \htheta^{(t-1)} ) $.
	
	And, for term (C),
	\begin{equation} \label{ineq: C-bound}
		\begin{split}
			&\bbE \|  (\nabla^2_\theta F (\calZ_{S_t\setminus \{ i\}}; \theta' ) - \nabla^2_\theta F (\calZ_{S_t\setminus \{ i\}}; \htheta^{(t-1)} ) ) [\htheta^{(t-1)}_{-i} - \htheta^{(t-1)}] \|_2 \\
			\leq &\bbE \sup_{b\in[0,1]}\|  (\nabla^2_\theta F (\calZ_{S_t\setminus \{ i\}};  b \htheta^{(t-1)} + (1-b) \htheta^{(t-1)}_{-i} ) - \nabla^2_\theta F (\calZ_{S_t\setminus \{ i\}}; \htheta^{(t-1)} ) ) [\htheta^{(t-1)}_{-i} - \htheta^{(t-1)}] \|_2 \\
			\overset{\textnormal{Assumption }\ref{asm: sgd-third-derivative-control} } \leq & \gamma K\bbE_{S_1,\ldots, S_{t-1}}\left[ \|\htheta^{(t-1)}_{-i} - \htheta^{(t-1)}\|_2^2  \right] \\
			\overset{\textnormal{Assumption }\ref{asm: reverse-jensen} } \leq & \gamma K \beta \frac{n}{K} \left( \bbE  \|\htheta^{(t-1)}_{-i} - \htheta^{(t-1)}\|_2 \right)^2 = \gamma \beta n \left( \bbE  \|\htheta^{(t-1)}_{-i} - \htheta^{(t-1)}\|_2 \right)^2.
		\end{split}
	\end{equation}
	
	Since $\htheta^{(0)} = \htheta^{(0)}_{-i}$, the error bound of $\htheta^{(t)}$ to $\htheta^{(t)}_{-i}$ holds at $t = 0$. Suppose now $\bbE \| \htheta^{(t-1)} - \htheta^{(t-1)}_{-i} \|_2 \leq  \frac{2 \eta_i}{(2 \lambda_0 - \gamma\beta) n}$ holds for some $t \geq 1$, then by plugging \eqref{ineq: A-bound}, \eqref{ineq: B-bound} and \eqref{ineq: C-bound} into \eqref{ineq: sgd-main-bound}, we have 
	\begin{equation} \label{ineq: sgd-approx-bound}
		\begin{split}
			\bbE\|\htheta^{(t)} - \htheta^{(t)}_{-i} \|_2 & \leq  (1 - \alpha_t K \lambda_0) \bbE \|\htheta^{(t-1)} - \htheta^{(t-1)}_{-i} \|_2 + \frac{\alpha_t K \eta_i}{n} + \alpha_t \gamma \beta n \left( \bbE  \|\htheta^{(t-1)}_{-i} - \htheta^{(t-1)}\|_2 \right)^2 \\
			& \leq (1 - \alpha_t K \lambda_0)   \frac{2 \eta_i}{(2 \lambda_0 - \gamma\beta) n} + \frac{\alpha_t K \eta_i}{n} + \alpha_t K \gamma \beta \frac{ \eta_i}{(2 \lambda_0 - \gamma\beta) n}\frac{4 \eta_i}{(2 \lambda_0 - \gamma\beta) K}\\
			& \leq  (1 - \alpha_t K \lambda_0)   \frac{2 \eta_i}{(2 \lambda_0 - \gamma\beta) n} + \frac{\alpha_t K \eta_i}{n} + \alpha_t K \gamma \beta \frac{ \eta_i}{(2 \lambda_0 - \gamma\beta) n}  =  \frac{2 \eta_i}{(2 \lambda_0 - \gamma\beta) n},
		\end{split} 
	\end{equation} where the next-to-last step holds since we have assumed $K \geq \frac{4 \|\eta\|_\infty }{2 \lambda_0 - \gamma \beta}$ by the assumption on $K$. This finishes the proof for the first part.

{\bf Step 2.} We now prove the bound on $\|\ttheta_{-i}^{(t)} -\htheta^{(t)}_{-i}\|_2$ by induction. Based on the Taylor expansion in \eqref{eq: stochastic-gradient-taylor-expansion} and the update rule of $\ttheta_{-i}^{(t)}$ in \eqref{eqn: iter-IACV-SGD}, we have for any $u \in \bbR^p$,
\begin{equation*}
	\begin{split}
		\langle \htheta^{(t)}_{-i} - \ttheta_{-i}^{(t)}, u \rangle  &= \langle \htheta^{(t-1)}_{-i} -  \alpha_t\nabla_\theta F (\calZ_{S_t\setminus \{i \} }; \htheta^{(t-1)}_{-i} ) -\ttheta_{-i}^{(t-1)}, u \rangle  \\
		& \quad + \left\langle \alpha_t \left( \nabla_\theta F(\calZ_{S_t \setminus \{i \} }; \htheta^{(t-1)}) + \nabla^2_\theta F(\calZ_{S_t \setminus \{i \} }; \htheta^{(t-1)})[ \ttheta_{-i}^{(t-1)} - \htheta^{(t-1)} ] \right), u \right\rangle  \\
		& \overset{ \eqref{eq: stochastic-gradient-taylor-expansion} }= \left\langle \left(\I_p - \alpha_t  \nabla^2_\theta F(\calZ_{S_t \setminus \{i \}}; \htheta^{(t-1)})  \right) [ \htheta^{(t-1)}_{-i} - \ttheta_{-i}^{(t-1)} ], u \right \rangle \\
			& \quad - \alpha_t \langle (\nabla^2_\theta F (\calZ_{S_t\setminus \{ i\}}; \theta' ) - \nabla^2_\theta F (\calZ_{S_t\setminus \{ i\}}; \htheta^{(t-1)} ) ) [\htheta^{(t-1)}_{-i} - \htheta^{(t-1)}], u \rangle.
	\end{split}
\end{equation*}

Thus we have
\begin{equation} \label{ineq: sgd-main-bound2}
	\begin{split}
		\bbE \| \htheta^{(t)}_{-i} - \ttheta_{-i}^{(t)} \|_2 & \leq \bbE\left[ \left( \| \I_p - \alpha_t  \nabla^2_\theta F(\calZ_{S_t \setminus \{i \}}; \htheta^{(t-1)})  \right) [ \htheta^{(t-1)}_{-i} - \ttheta_{-i}^{(t-1)} ] \|_2 \right] \\
		& \quad + \alpha_t \bbE \left( \|(\nabla^2_\theta F (\calZ_{S_t\setminus \{ i\}}; \theta' ) - \nabla^2_\theta F (\calZ_{S_t\setminus \{ i\}}; \htheta^{(t-1)} ) ) [\htheta^{(t-1)}_{-i} - \htheta^{(t-1)}] \|^2_2\right)\\
		& \quad \leq (1- \alpha_t K \lambda_0 ) \bbE \|\htheta^{(t-1)}_{-i} - \ttheta_{-i}^{(t-1)} \|_2 + \alpha_t \gamma \beta n \left( \bbE \|\htheta^{(t-1)}_{-i}-\htheta^{(t-1)} \|_2  \right)^2 \\
		& \leq  (1- \alpha_t K \lambda_0 ) \bbE \|\htheta^{(t-1)}_{-i} - \ttheta_{-i}^{(t-1)} \|_2 + \alpha_t \gamma \beta n   \left( \frac{2 \eta_i}{(2 \lambda_0 - \gamma\beta) n} \right)^2.
	\end{split}
\end{equation} Here the next-to-last step holds by a similar arguments as in \eqref{ineq: A-bound} and \eqref{ineq: C-bound}, while the last step holds by the error bound we have proved for $\bbE \|\htheta^{(t-1)}_{-i}-\htheta^{(t-1)} \|_2 $ in Step 1.

Since $\htheta^{(0)}_{-i} = \ttheta^{(0)}_{-i}$, the error bound holds at $t = 0$. Suppose now $\bbE \|\ttheta_{-i}^{(t-1)} - \htheta^{(t-1)}_{-i}\|_2 \leq\frac{4 \gamma \beta \eta_i^2 }{\lambda_0 (2 \lambda_0 - \gamma\beta)^2 nK }$ holds, then at time $t$, based on \eqref{ineq: sgd-main-bound2}, we have
\begin{equation} \label{ineq: sgd-cv-bound}
	\begin{split}
		\bbE \| \htheta^{(t)}_{-i} - \ttheta_{-i}^{(t)} \|_2 & \leq (1- \alpha_t K \lambda_0 )  \frac{4 \gamma \beta \eta_i^2 }{\lambda_0 (2 \lambda_0 - \gamma\beta)^2 nK } + \alpha_t \gamma n \beta  \left( \frac{2 \eta_i}{(2 \lambda_0 - \gamma\beta) n} \right)^2 =  \frac{4 \gamma \beta \eta_i^2 }{\lambda_0 (2 \lambda_0 - \gamma\beta)^2 n K },
	\end{split}
\end{equation} which proves the desired bound and thus completes the proof of this theorem.

\end{proof}

\begin{proof}[Proof of Theorem \ref{th: model-assessment-bound-SGD}]
	First, for every $i \in [n]$, by the Taylor expansion, we have
	\begin{equation*}
		\begin{split}
			\ell(Z_i; \ttheta^{(t)}_{-i} ) = \ell(Z_i; \htheta^{(t)}_{-i}) + \nabla_\theta \ell(Z_i;  \theta' )^\top [ \ttheta^{(t)}_{-i}  -   \htheta^{(t)}_{-i} ],
		\end{split}
	\end{equation*} where $\theta' = a \ttheta^{(t)}_{-i} + (1-a) \htheta^{(t)}_{-i}$ for some $0 \leq a \leq 1$. 
 
 Next 
	\begin{equation*}
		\begin{split}
			\bbE(\textnormal{Err}_{\textnormal{CV}}^{(t)})  &\leq \frac{1}{n} \sum_{i=1}^n \bbE(| \ell(Z_i; \ttheta^{(t)}_{-i}  ) - \ell(Z_i; \htheta^{(t)}_{-i}) |) \\
			& \leq \frac{1}{n} \sum_{i=1}^n \bbE \left\{\sup_{a \in [0,1]} | \nabla_\theta \ell(Z_i;  a \ttheta^{(t)}_{-i} + (1-a) \htheta^{(t)}_{-i} )^\top [ \ttheta^{(t)}_{-i}  -   \htheta^{(t)}_{-i} ] | \right\} \\
    &\leq \frac{1}{n} \sum_{i=1}^n \frac{4 \eta'_i  \eta_i^2\gamma \beta }{\lambda_0 (2 \lambda_0 - \gamma\beta)^2 nK }, 
		\end{split}
	\end{equation*}
where	the last inequality is by Assumption \ref{asm:SGD-CV-error} and Theorem \ref{th: SGD-case-approx-error}.
\end{proof}

\begin{proof}[Proof of Theorem \ref{th: Prox-GD-case-approx-error}]
The proof strategy of this theorem is similar to the proof of Theorem \ref{th: GD-case-approx-error}.
For convenience, we will define
\begin{equation} \label{def: prox-operator}
    \prox_{h}^{\alpha_t}(\theta'):= \arg\!\min_\theta\left\{\frac{1}{2\alpha_t}\|\theta - \theta'\|^2_2 + h(\theta)\right\}.
\end{equation}
A key property we use here is that the proximal operator $\prox_{h}^{\alpha_t}$ is nonexpansive for convex $h$ \citep[Proposition 8.19]{wright2022optimization}, i.e.,
\[\|\prox_{h}^{\alpha_t}(x) - \prox_{h}^{\alpha_t}(x')\|_2 \leq \|x-x'\|_2.\]

	{\bf Step 1.} In this step, we derive the error bound for $\| \htheta^{(t)} - \htheta^{(t)}_{-i} \|_2$. By the updating rules of $\htheta^{(t)}$ and $\htheta^{(t)}_{-i}$, we have 
	\begin{equation*}
		\begin{split}
			\| \htheta^{(t)} - \htheta^{(t)}_{-i} \|_2 &= \left\|   \prox_{h}^{\alpha_t} \left( \htheta^{(t-1)} - \alpha_t \nabla_{\theta} g(\calZ; \htheta^{(t-1)} ) \right) -  \prox_{h}^{\alpha_t} \left( \htheta^{(t-1)}_{-i} - \alpha_t \nabla_{\theta} g(\calZ_{-i}; \htheta^{(t-1)}_{-i} ) \right)  \right\|_2 \\
			&  \leq \left\| \htheta^{(t-1)} - \alpha_t \nabla_{\theta} g(\calZ; \htheta^{(t-1)} ) - (\htheta^{(t-1)}_{-i} - \alpha_t \nabla_{\theta} g(\calZ_{-i}; \htheta^{(t-1)}_{-i} )) \right \|_2.
		\end{split}
	\end{equation*} Here the inequality holds by the nonexpansiveness property of the proximal operator.
	The rest of the proof follows exactly as  in Step 1 of Theorem \ref{th: GD-case-approx-error} by replacing $F(\cdot; \theta)$ with $g(\cdot;\theta)$.
	
	{\bf Step 2.} By our estimator in \eqref{eqn: iter-IACV-ProxGD}, we have
	\begin{equation*}
		\begin{split}
			\| \ttheta_{-i}^{(t)} - \htheta^{(t)}_{-i} \|_2 &= \Big\| \prox_{h}^{\alpha_t}\left(\ttheta_{-j}^{(t-1)} - \alpha_t ( \nabla_\theta g(\calZ_{-j}; \htheta^{(t-1)} ) + \nabla^2_\theta g( \calZ_{-j}; \htheta^{(t-1)} )[ \ttheta_{-j}^{(t-1)} - \htheta^{(t-1)} ] ) \right)  \\
			& \quad -   \prox_{h}^{\alpha_t} \left( \htheta^{(t-1)}_{-i} - \alpha_t \nabla_{\theta} g(\calZ_{-i}; \htheta^{(t-1)}_{-i} ) \right) \Big\|_2 \\
			& \leq \| \ttheta_{-j}^{(t-1)} - \alpha_t ( \nabla_\theta g(\calZ_{-j}; \htheta^{(t-1)} ) + \nabla^2_\theta g( \calZ_{-j}; \htheta^{(t-1)} )[ \ttheta_{-j}^{(t-1)} - \htheta^{(t-1)} ] ) - \htheta^{(t-1)}_{-i} \\
			& \quad + \alpha_t \nabla_{\theta} g(\calZ_{-i}; \htheta^{(t-1)}_{-i} ) \|_2,
		\end{split}
	\end{equation*} where the inequality is again due to the nonexpansiveness property of the proximal operator.
	The remainder of the proof again follows the same argument as in Step 2 of Theorem \ref{th: GD-case-approx-error} by replacing $F(\cdot; \theta)$ with $g(\cdot;\theta)$.
\end{proof}

\subsection{Proofs of convergence results} \label{proof-sec: limiting-behavior}
In this section we prove Theorems~\ref{th: limiting-behavior-estimator-GD} and~\ref{th: limiting-behavior-estimator-prox-GD}, which establish convergence of the IACV estimator to the NS estimator in the GD and ProxGD setting, respectively.

\begin{proof}[Proof of Theorem \ref{th: limiting-behavior-estimator-GD}]
Since we assume $\htheta^{(t)}$ converges to $\htheta$ and Assumption \ref{asm: hessian-condition} is satisfied at all iterations, we therefore see that $\nabla^2_\theta F ( \calZ_{-i}; \htheta )$ is invertible by the continuity of $\nabla^2_\theta F ( \calZ_{-i}; \theta )$.  
	Let us denote $$b_i^t = \htheta^{(t-1)} - (\nabla^2_{\theta} F ( \calZ_{-i}; \htheta^{(t-1)} ) )^{-1} \nabla_{\theta} F( \calZ_{-i}; \htheta^{(t-1)} ).$$
	
	First, based on the update rule for $\ttheta_{-i}^{(t)}$ in \eqref{eqn: iter-IACV-GD}, we have:
	\begin{equation*} \label{eq: convergence-update-equation}
		\begin{split}
			\ttheta_{-i}^{(t)} - b_i^t = [\I_p - \alpha \nabla^2_{\theta} F ( \calZ_{-i}; \htheta^{(t-1)} ) ] (\ttheta_{-i}^{(t-1)} - b_i^t ).
		\end{split}
	\end{equation*}
	This implies
	\begin{equation*}
		\begin{split}
			\ttheta_{-i}^{(t)} - \ttheta^{\NS}_{-i}  &=  [\I_p - \alpha \nabla^2_{\theta} F ( \calZ_{-i}; \htheta^{(t-1)} ) ] (\ttheta_{-i}^{(t-1)} - b_i^t ) + b_i^t - \ttheta^{\NS}_{-i}\\
			& =  [\I_p - \alpha \nabla^2_{\theta} F ( \calZ_{-i}; \htheta^{(t-1)} ) ](\ttheta_{-i}^{(t-1)} - \ttheta^{\NS}_{-i}) -  \alpha\nabla^2_{\theta} F ( \calZ_{-i}; \htheta^{(t-1)} )(\ttheta^{\NS}_{-i} - b_i^t),
		\end{split}
	\end{equation*} 
	and thus
 \begin{equation*}
		\begin{split}\|\ttheta_{-i}^{(t)} - \ttheta^{\NS}_{-i}\|_2
 &\leq \|\I_p - \alpha \nabla^2_{\theta} F ( \calZ_{-i}; \htheta^{(t-1)} )\|\cdot \|\ttheta_{-i}^{(t-1)} - \ttheta^{\NS}_{-i}\|_2 + \alpha \|\nabla^2_{\theta} F ( \calZ_{-i}; \htheta^{(t-1)} )\|\cdot \| \ttheta^{\NS}_{-i} - b_i^t\|_2\\
 &\leq  (1-\alpha n\lambda_0)\cdot \|\ttheta_{-i}^{(t-1)} - \ttheta^{\NS}_{-i}\|_2 +  \| \ttheta^{\NS}_{-i} - b_i^t\|_2,\end{split}\end{equation*}
 where the last step holds by Assumption~\ref{asm: hessian-condition} together with the choice of $\alpha$.

	Note that since $\nabla^2_\theta F(\calZ_{-i}; \theta) $ and $ \nabla_\theta F (\calZ_{-i}; \theta)$ are continuous in $\theta$, we then have $b_i^t \to \ttheta^{\NS}_{-i}$ as $\htheta^{(t)}$ converges to $\htheta$, i.e.,
 \[\lim_{t\rightarrow\infty} \|\ttheta^{\NS}_{-i}-b^t_i\|_2 = 0.\]
 Combined with the above, this is sufficient to prove that 
 \[\lim_{t\rightarrow \infty}\|\ttheta_{-i}^{(t)} - \ttheta^{\NS}_{-i}\|_2 = 0,\]
 which completes the proof of the theorem. \end{proof}

\begin{proof}[Proof of Theorem \ref{th: limiting-behavior-estimator-prox-GD}]
	Let us first prove the claim that $\ttheta^{\NS}_{-i}$ in \eqref{eq: one-step-prox-Newton-estimator} is a fixed point of 
\begin{equation} \label{eq: proximal-gd-fixed-point-equation}
	a^{(t)} =\prox_{h}^{\alpha} \left(  a^{(t-1)} - \alpha \left( \nabla_\theta g(\calZ_{-i}; \htheta ) + \nabla^2_\theta g(\calZ_{-i}; \htheta ) [a^{(t-1)} - \htheta  ]   \right) \right),
\end{equation} where $\prox$ is defined in \eqref{def: prox-operator}.
To show this, it is equivalent to show $$\ttheta^{\NS}_{-i} = \prox_{h}^{\alpha} \left( \ttheta^{\NS}_{-i} - \alpha \left( \nabla_\theta g(\calZ_{-i}; \htheta ) + \nabla^2_\theta g(\calZ_{-i}; \htheta ) [\ttheta^{\NS}_{-i} - \htheta  ]   \right) \right). $$
	
	Since $h$ is convex, by the definition of $\prox$ in \eqref{eq: prox-bar-definition} and the definition of $\ttheta^{\NS}_{-i}$ in \eqref{eq: one-step-prox-Newton-estimator}, the first-order optimality condition tells us that $\ttheta^{\NS}_{-i}$ must satisfy
	\begin{equation} \label{eq: first-order-optimality-condition}
		 0 \in - \left(\nabla^2_\theta g(\calZ_{-i}; \htheta) \left(   \htheta - \ttheta^{\NS}_{-i} \right) - \nabla_\theta g(\calZ_{-i}; \htheta )\right) +  \partial h(\ttheta^{\NS}_{-i}).
	\end{equation}
	
 Also, the proximal operator $ \prox_{h}^{\alpha}$ has a unique solution as $h$ is convex. By the definition of the proximal operator $\prox_{h}^{\alpha }$, the output of $\prox_{h}^{\alpha} \left( \ttheta^{\NS}_{-i} - \alpha \left( \nabla_\theta g(\calZ_{-i}; \htheta ) + \nabla^2_\theta g(\calZ_{-i}; \htheta ) [\ttheta^{\NS}_{-i}  - \htheta  ]   \right) \right)$, say $x^*$, is the unique vector that satisfies
	\begin{equation} \label{eq: first-order-opt-condition2}
		\begin{split}
			0 \in -\frac{\ttheta^{\NS}_{-i} - \alpha \left( \nabla_\theta g(\calZ_{-i}; \htheta ) + \nabla^2_\theta g(\calZ_{-i}; \htheta ) [\ttheta^{\NS}_{-i} - \htheta  ]  \right) - x^*}{\alpha} + \partial h(x^*).
		\end{split}
	\end{equation} Notice that $x^* = \ttheta^{\NS}_{-i}$ satisfies \eqref{eq: first-order-opt-condition2} because of \eqref{eq: first-order-optimality-condition}. Thus, $\ttheta^{\NS}_{-i}$ is a fixed point of~\eqref{eq: proximal-gd-fixed-point-equation} as desired.

Next, we show $\ttheta^{(t)}_{-i}$ in \eqref{eqn: iter-IACV-ProxGD} converges to $\ttheta^{\NS}_{-i}$ under the assumptions given in the statement of the theorem. By \eqref{eqn: iter-IACV-ProxGD} together with the calculations above, we have
\begin{equation*}
	\begin{split}
		\ttheta^{(t)}_{-i} - \ttheta^{\NS}_{-i} &=  \prox_{h}^{\alpha}\left(\ttheta_{-i}^{(t-1)} - \alpha ( \nabla_\theta g(\calZ_{-i}; \htheta^{(t-1)} ) + \nabla^2_\theta g( \calZ_{-i}; \htheta^{(t-1)} )[ \ttheta_{-i}^{(t-1)} - \htheta^{(t-1)} ] ) \right) \\
		&\quad  - \prox_{h}^{\alpha} \left( \ttheta^{\NS}_{-i} - \alpha \left( \nabla_\theta g(\calZ_{-i}; \htheta ) + \nabla^2_\theta g(\calZ_{-i}; \htheta ) [\ttheta^{\NS}_{-i} - \htheta  ]   \right) \right).
	\end{split}
\end{equation*} So we have
\begin{equation} \label{eq: proximal-case-convergence-iteration-bound}
	\begin{split}
		\|\ttheta^{(t)}_{-i} - \ttheta^{\NS}_{-i}\|_2 &= \Big\|   \prox_{h}^{\alpha}\left(\ttheta_{-i}^{(t-1)} - \alpha ( \nabla_\theta g(\calZ_{-i}; \htheta^{(t-1)} ) + \nabla^2_\theta g( \calZ_{-i}; \htheta^{(t-1)} )[ \ttheta_{-i}^{(t-1)} - \htheta^{(t-1)} ] ) \right) \\
		& \quad -  \prox_{h}^{\alpha} \left( \ttheta^{\NS}_{-i} - \alpha \left( \nabla_\theta g(\calZ_{-i}; \htheta ) + \nabla^2_\theta g(\calZ_{-i}; \htheta ) [\ttheta^{\NS}_{-i} - \htheta  ]   \right) \right) \Big\|_2 \\
		& \leq \Big\|\left(\ttheta_{-i}^{(t-1)} - \alpha ( \nabla_\theta g(\calZ_{-i}; \htheta^{(t-1)} ) + \nabla^2_\theta g( \calZ_{-i}; \htheta^{(t-1)} )[ \ttheta_{-i}^{(t-1)} - \htheta^{(t-1)} ] ) \right) \\
		& \quad  -  \left( \ttheta^{\NS}_{-i} - \alpha \left( \nabla_\theta g(\calZ_{-i}; \htheta ) + \nabla^2_\theta g(\calZ_{-i}; \htheta ) [\ttheta^{\NS}_{-i} - \htheta  ]   \right) \right) \Big\|_2 \\
  & = \Big\|  [\I - \alpha \nabla^2_\theta g( \calZ_{-i}; \htheta^{(t-1)} ) ](  \ttheta^{(t-1)}_{-i} - \ttheta^{\NS}_{-i} )  + \alpha ( \nabla^2_\theta g(\calZ_{-i}; \htheta ) - \nabla^2_\theta g( \calZ_{-i}; \htheta^{(t-1)} ) ) \ttheta^{\NS}_{-i} \\
		& \quad  - \alpha( \nabla_\theta g(\calZ_{-i}; \htheta^{(t-1)} ) - \nabla^2_\theta g( \calZ_{-i}; \htheta^{(t-1)} ) \htheta^{(t-1)} ) + \alpha ( \nabla_\theta g(\calZ_{-i}; \htheta )- \nabla^2_\theta g(\calZ_{-i}; \htheta ) \htheta ) \Big\|_2 \\
		& \leq \left\|  [\I - \alpha \nabla^2_\theta g( \calZ_{-i}; \htheta^{(t-1)} ) ](  \ttheta^{(t-1)}_{-i} - \ttheta^{\NS}_{-i} ) \right\|_2 + \alpha\Big[ \| ( \nabla^2_\theta g(\calZ_{-i}; \htheta ) - \nabla^2_\theta g( \calZ_{-i}; \htheta^{(t-1)} ) ) \ttheta^{\NS}_{-i}  \|_2 \\
		& \quad + \| \nabla_\theta g(\calZ_{-i}; \htheta^{(t-1)} ) - \nabla_\theta g(\calZ_{-i}; \htheta )  \|_2 +  \| \nabla^2_\theta g( \calZ_{-i}; \htheta^{(t-1)} ) \htheta^{(t-1)} - \nabla^2_\theta g(\calZ_{-i}; \htheta ) \htheta  \|_2\Big]\\
  &\leq (1-\alpha n\lambda_0)\cdot  \|  \ttheta^{(t-1)}_{-i} - \ttheta^{\NS}_{-i} \|_2 + \alpha\Big[ \| ( \nabla^2_\theta g(\calZ_{-i}; \htheta ) - \nabla^2_\theta g( \calZ_{-i}; \htheta^{(t-1)} ) ) \ttheta^{\NS}_{-i}  \|_2 \\
		& \quad + \| \nabla_\theta g(\calZ_{-i}; \htheta^{(t-1)} ) - \nabla_\theta g(\calZ_{-i}; \htheta )  \|_2 +  \| \nabla^2_\theta g( \calZ_{-i}; \htheta^{(t-1)} ) \htheta^{(t-1)} - \nabla^2_\theta g(\calZ_{-i}; \htheta ) \htheta  \|_2\Big].
	\end{split}
\end{equation} where the second step holds  because $\prox_{h}^{\alpha}$ is nonexpansive for convex $h$ \citep[Proposition 8.19]{wright2022optimization}, and the last step applies Assumption~\ref{asm: hessian-condition} (with $g$ in place of $F$, as assumed in the Theorem).
Since $\nabla^2_\theta g(\calZ_{-i}; \theta)$, $\nabla_\theta g(\calZ_{-i}; \theta)$ are continuous in $\theta$, and $\htheta^{(t)}$ is assumed to converge to $\htheta$, 
the quantity in square brackets converges to zero.
As in the proof of Theorem~\ref{th: limiting-behavior-estimator-GD}, this therefore implies
\[\lim_{t\rightarrow \infty} \|\ttheta^{(t)}_{-i} - \ttheta^{\NS}_{-i}\|_2\rightarrow 0.\]	 \end{proof}

\paragraph{Can Theorem~\ref{th: limiting-behavior-estimator-GD} be extended to SGD?}
	A key intuition behind the proof of Theorem \ref{th: limiting-behavior-estimator-GD} is that $\ttheta^{\NS}_{-i}$ in \eqref{eqn:NS-intro} is a fixed point of the following update equation
\begin{equation*}
%	\begin{split}
		x^{t} =  x^{t-1} - \alpha_t \Big( \nabla_\theta F(\calZ_{-i}; \htheta ) 
		 + \nabla^2_\theta F(\calZ_{-i}; \htheta ) [x^{t-1} - \htheta  ]   \Big).
	%\end{split}
\end{equation*}	In addition, we use the fact that given any $\epsilon >0$, a sequence of scalars $\{ x^t \}$ that satisfies $x^t = (1 - \alpha_t) x^{t-1} + \epsilon$ must have $x^t\rightarrow 0$ when we take $\alpha_t \equiv \alpha > 0$. 

 In the SGD setting, on the other hand, the techniques for proving Theorem \ref{th: limiting-behavior-estimator-GD} fail as $\alpha_t$ there are required to decay to zero eventually for convergence. Thus, if we have a sequence of values $\{x^t\}$ satisfying $x^t = (1 - \alpha_t) x^{t-1} + \epsilon$  for any $\epsilon>0$, this does not necessarily imply $x^t\rightarrow 0$. Whether a result analogous to Theorem~\ref{th: limiting-behavior-estimator-GD} might be provable for SGD via a different technique remains an open question.

\end{document}